\newcommand{\argmax}{\mathop{\rm arg~max}\limits}
\newtheorem{theorem}{Theorem}
\begin{document}
\sloppy
\title{Interference-Aware Opportunistic Random Access in Dense IoT Networks}

\author{Huifa Lin,~\IEEEmembership{Member,~IEEE,}
        Kwang~Soon~Kim,~\IEEEmembership{Senior Member,~IEEE,}\\ and
        Won-Yong Shin,~\IEEEmembership{Senior Member,~IEEE}
\IEEEcompsocitemizethanks{\IEEEcompsocthanksitem H. Lin is with Telecommunication and Image Technology Laboratories of Sharp Corporation, Chiba-shi 261-8520, Japan.\protect\\
Email: huifa.lin.dr@ieee.org
\IEEEcompsocthanksitem K. S. Kim is with the Department of Electrical and Electronic Engineering, Yonsei University, Seoul 03722, Republic of Korea.\protect\\
Email: ks.kim@yonsei.ac.kr
\IEEEcompsocthanksitem W.-Y. Shin (corresponding author) is with the Department of Computational Science and Engineering, Yonsei University, Seoul 03722, Republic of Korea.\protect\\
E-mail: wy.shin@yonsei.ac.kr
 }}


\newtheorem{definition}{Definition}
\newtheorem{thm}{Theorem}
\newtheorem{lemma}{Lemma}
\newtheorem{example}{Example}
\newtheorem{corollary}{Corollary}
\newtheorem{proposition}{Proposition}
\newtheorem{conjecture}{Conjecture}
\newtheorem{remark}{Remark}

\newcommand{\red}[1]{{\textcolor[rgb]{1,0,0}{#1}}}

\def \diag{\operatornamewithlimits{diag}}
\def \min{\operatornamewithlimits{min}}
\def \max{\operatornamewithlimits{max}}
\def \log{\operatorname{log}}
\def \max{\operatorname{max}}
\def \rank{\operatorname{rank}}
\def \out{\operatorname{out}}
\def \exp{\operatorname{exp}}
\def \arg{\operatorname{arg}}
\def \E{\operatorname{E}}
\def \tr{\operatorname{tr}}
\def \SNR{\operatorname{SNR}}
\def \dB{\operatorname{dB}}
\def \ln{\operatorname{ln}}
\def \bmat{ \begin{bmatrix} }
\def \emat{ \end{bmatrix} }
\def \be {\begin{eqnarray}}
\def \ee {\end{eqnarray}}
\def \ben {\begin{eqnarray*}}
\def \een {\end{eqnarray*}}

\newcommand{\Pro}[1]{\mathrm{Pr}\left\{#1\right\}}
\newcommand{\LIF}[2]{\tilde{L}_{\pi_1(#1),#2}}
\newcommand{\TIL}[2]{L_{\pi_2(#1),#2}}
\newcommand{\TIF}[2]{T_{\pi_1(#1),#2}}
\newcommand{\KIF}[2]{T_{\pi_1(#1),\pi_2(#2)}}
\newcommand{\snr}{\textsf{snr}}
\newcommand{\sinr}{\textsf{sinr}}
\newcommand{\CanSB}{\mathcal{B}}
\newcommand{\CanSA}{\mathcal{A}}
\newcommand{\Norm}[1]{\left|{#1}\right|}

\IEEEtitleabstractindextext{
\begin{abstract}
It is a challenging task to design a random access protocol that achieves the optimal throughput in multi-cell \emph{random access} with decentralized transmission due to the difficulty of coordination.
In this paper, we present a decentralized \emph{interference-aware opportunistic random access (IA-ORA)} protocol that enables us to obtain the optimal \emph{throughput scaling} in an ultra-dense multi-cell random access network with one access point (AP) and a number of users.
In sharp contrast to \emph{opportunistic scheduling} for cellular multiple access where users are selected by base stations, under the IA-ORA protocol, each user opportunistically transmits with a predefined physical layer (PHY) data rate in a decentralized manner if not only the desired signal power to the serving AP is sufficiently large but also the generating interference leakage power to the other APs is sufficiently small (i.e., two threshold conditions are fulfilled).
As a main result, it is shown that the optimal aggregate throughput scaling (i.e., the MAC throughput of $\frac{1}{e}$ in a cell and the power gain) is achieved in a high signal-to-noise ratio regime if the number of per-cell users exceeds some level.
Additionally, it is numerically demonstrated via computer simulations that under practical settings, the proposed IA-ORA protocol outperforms conventional opportunistic random access protocols in terms of aggregate throughput.
\end{abstract}
\begin{IEEEkeywords}
Decentralized transmission, interference-aware opportunistic random access (IA-ORA), inter-cell interference, multiuser diversity, throughput scaling.
\end{IEEEkeywords}}

\maketitle
\IEEEdisplaynotcompsoctitleabstractindextext

%
\IEEEpeerreviewmaketitle


\section{Introduction}
\label{sec:introduction}
Recently, the use of random access in wireless communications has received considerable attention along with the rapid development of machine-type communications (MTC) and Internet of Things (IoT) networks due to the necessity of a relatively low protocol overhead and high spectral efficiency \cite{rajandekar2015survey, Bockelmann2016}.
For several decades, a variety of random access protocols have been implemented based on ALOHA and its variants with carrier sensing \cite{Bertsekas1992}.
In MTC and IoT networks, transmission activity of vast devices tends to be irregular and unpredictable with short packets \cite{rajandekar2015survey}.
Under the assumption of such a traffic pattern, instead of state-of-the-art media access control (MAC) protocols (e.g., carrier-sense multiple access with collision avoidance (CSMA/CA)), rather simple uncoordinated random access protocols (e.g., slotted ALOHA) that incur much less protocol overheads become favorable, where protocol overheads correspond to the waiting time before users access the channel due to the channel sensing and random backoff (BO) in CSMA/CA.
However, the major problem of slotted ALOHA is its low MAC layer efficiency.
To overcome this problem, there have already been research efforts on improving the MAC throughput by introducing coded slotted ALOHA \cite{paolini2015coded} and cooperative slotted ALOHA for multi-cell random access environments \cite{jakovetic2015cooperative}.

Furthermore, along with the large increase in mobile users and their data packets in IoT networks, there has been a trend of network densification in ultra-dense networks (UDNs) with sufficiently many users \cite{chen2017machine}, e.g., a network environment where a cell in which one access point (AP) covers contains a plenty of users. Such network configuration can be observed in a variety of future MTC and IoT applications.
It is thus crucial to grasp the nature of random access dense IoT networks that are composed of multiple cells sharing the same spectrum, so called \emph{multi-cell random access} networks.\footnote{Here, we use the term ``cell'' to represent the domain of both an AP and its associated users.}
In such networks, in addition to the intra-cell collision (simultaneous transmission from multiple users in the same cell), transmission with no coordinated scheduling among APs causes interference to other-cell APs, which may result in a failure of packet decoding at the receivers.
Hence, inter-cell interference should also be carefully controlled in multi-cell random access networks.
In this paper, we address a challenging and fundamental issue of multi-cell random access for MTC and IoT environments.

\subsection{Related Work} \label{subsec_relatedWork}
Interference management of cellular networks with multiple base stations has been studied extensively \cite{wyner1994shannon, somekh2000shannon}.
While there have been many attempts to develop the optimal scheme with respect to the information-theoretic capacity in multiuser cellular networks, interference alignment (IA) was recently introduced for fundamentally solving the interference problem in multiuser interference channels \cite{cadambe2008interference}.
It was shown that IA can asymptotically achieve the optimal degrees of freedom, which are equal to $\frac{K}{2}$, in the $K$-user interference channel with time-varying fading.
Follow-up studies showed that the interference management based on IA can be applicable to a variety of communication scenarios \cite{gou2010degrees, cadambe2008degrees, yang2017opportunistic} such as interfering multiple access networks \cite{suh2008interference, jung2011opportunistic, jung2012opportunistic}.
Besides the multiple access scenarios in which there is no collision, it is of paramount importance to design an interference management protocol in \emph{random access}.
For multi-cell random access scenarios, there have been several attempts to manage interference by performing IA \cite{gollakota2009interference, lin2011random, jin2015opportunistic} or successive interference cancellation (SIC) \cite{jakovetic2015cooperative, zhou2014bbn}.
In \cite{lin2015simple, lin2016decentralized}, decentralized power allocation  was shown by means of interference mitigation for random access with functionalities of multi-packet reception and SIC at the receiver end. Besides the studies in~\cite{lin2015simple, lin2016decentralized}, there have been a variety of attempts to achieve collision avoidance and interference management in multiuser communication scenarios. Specifically, to reduce the amount of interference, a receiver-oriented channel allocation algorithm for multi-channel device-to-device (D2D) communications was presented in~\cite{HZhao_TVT}; a Sender-Jump Receiver-Wait algorithm for blind rendezvous in cognitive ratio networks (CRNs) was designed in~\cite{JLi_TMC}; a bio-inspired algorithm for efficient spectrum allocation in cognitive IoT networks was proposed in~\cite{JLi_IoTJ}; an access delay model for CRNs was formulated in~\cite{JLi_TCOM} from the cross-layer perspective by employing channel-hopping-based multiuser contention; joint imperfect spectrum sensing
and multi-channel access contention for multi-channel CRNs were shown in~\cite{SZhang_TWC} from the cross-layer perspective; and a collision-avoidance MAC protocol for distributed ad hoc networks was presented in~\cite{HZhao_JNCA}, converging to a collision-free network.

On the other hand, there have been a great deal of studies on exploiting the {\it multiuser diversity} gain in single-cell broadcast networks with a sufficiently large number of users, where opportunistic scheduling \cite{knopp1995information}, opportunistic beamforming \cite{viswanath2002opportunistic}, and random beamforming \cite{sharif2005capacity} were developed. It was also investigated how to exploit the multiuser diversity gain in multi-cell environments. In particular, the work of \cite{sharif2005capacity} on opportunistic transmission can be extended to multi-cell broadcast networks.
As a more challenging task than the downlink case, for multi-cell multiple access, it was shown that if scheduling criteria are properly determined and the number of per-cell users is greater than some level, then the full multiuser diversity gain can be achieved by a distributed user scheduling strategy, thereby leading to the optimal throughput scaling~\cite{shin2012can}.
In addition to the above multiple access scenarios, the multiuser diversity gain can also be exploited in random access networks where there may exist a collision.
The notion of single-cell opportunistic random access (ORA) (dubbed channel-aware slotted ALOHA in the literature) was presented for slotted ALOHA random access networks deploying a single AP \cite{qin2006distributed}.
By assuming that channel state information (CSI) are available at the transmitters, the ORA protocols in \cite{qin2006distributed} were shown to achieve the multiuser diversity gain without any centralized scheduling.
This idea was extended to various scenarios, including slotted ALOHA random access networks with imperfect CSI \cite{wang2009transmission}, a scenario with discontinuous channel measurements \cite{moon2016channel}, carrier sense multiple access networks \cite{miao2012channel}, and multichannel wireless networks \cite{Khanian2016, liu2015stay}.
Nevertheless, the single AP problem was handled in all the protocols~\cite{qin2006distributed, adireddy2005exploiting, wang2009transmission, moon2016channel, miao2012channel, Khanian2016, liu2015stay}; thus, it is not straightforward to apply the idea of ORA to \emph{multi-cell} random access networks where there exists inter-cell interference.

\subsection{Motivation and Contributions}
In this paper, we consider an ultra-dense  time-division duplex (TDD) $K$-cell slotted ALOHA \emph{random access} network, consisting of one AP and $N$ users in \emph{each} cell, which is suited for MTC and IoT networks. We focus primarily on rather simple slotted ALOHA instead of sophisticated MAC protocols since we aim at designing an intuitive random access protocol that leads to the $K$-fold MAC throughput gain in our network. To this end, in the network model with decentralized transmission, we propose an \emph{interference-aware opportunistic random access (IA-ORA)} protocol that achieves the optimal \emph{throughput scaling} (i.e., sum rate scaling) by effectively exploiting the multiuser diversity gain.
Precisely, not only the MAC throughput (or equivalently, the network throughput) of $\frac{K}{e}$ but also the power gain of $\log \log N$ can be achieved in the $K$-cell slotted ALOHA random access network under consideration.
Here, the {\em MAC throghput} of slotted ALOHA is defined as the average number of successfully decoded packets per time slot in the literature, extending the performace metric in \cite{li2016maximum} to multi-cell environments.
First of all, it is worth noting that our IA-ORA protocol is fundamentally different from the two different types of opportunistic transmission protocols in Section \ref{subsec_relatedWork} from the following perspectives:
\begin{itemize}
  \item In contrast to \emph{opportunistic scheduling} in \cite{knopp1995information, viswanath2002opportunistic, sharif2005capacity, shin2012can} applied for cellular multiple access in which base stations select users based on feedforward information, both the intra-cell collision and the inter-cell interference can be mitigated only by users' \emph{opportunistic transmission} in our decentralized IA-ORA protocol designed for random access.
  \item While the conventional ORA protocol in \cite{qin2006distributed} was shown to achieve the multiuser diversity gain (i.e., the power gain) for single-AP slotted ALOHA random access, its extension to multi-cell random access is not straightforward since there exists the inter-cell interference.
Moreover, besides the power gain, it remains open how to offer the \emph{$K$-fold increase} in the MAC throughput via appropriately mitigating the inter-cell interference in the $K$-cell slotted ALOHA random access network.
\end{itemize}

In our ultra-dense multi-cell random access network, users in each cell contend for the same channel at random without centralized coordination from the serving AP.
Consequently, without carefully designing a random access protocol, it is impossible to entirely avoid the inter-cell interference as well as the intra-cell collision, which may result in a failure of packet decoding at the receivers.
In our network model, each user needs to decide whether to transmit or not by itself, without any centralized coordination from the serving AP.
The nature of such random access imposes another difficulty on the protocol design.
We thus aim to respond to these challenges by introducing the IA-ORA protocol.
To this end, we first assume that uplink \emph{channel amplitude information (CAI)} to multiple APs is available at the transmitters owing to the reciprocity between uplink and downlink in TDD mode.
To design our protocol, we utilize this \emph{partial} CAI at the transmitter (CAIT) that can be acquired through a small amount of feedback information sent by APs (to be specified in Section \ref{sec_model}).
In the initialization phase, system parameters such as two thresholds and a physical layer (PHY) data rate are computed \emph{offline} and are broadcast over the network.
Afterwards, each user in a cell first estimates the uplink CAI through the downlink channel in each time slot.
Then, each user determines whether both 1) the channel gain to the belonging AP is higher than one threshold and 2) the total inter-cell interference leakage generated by this user to the other APs is lower than another threshold.
Users transmit their data along with the given PHY data rate opportunistically if the above two conditions hold.
With the help of such opportunistic transmission, when $N$ is sufficiently large in our ultra-dense random access setup, the receivers are able to successfully decode their desired packets sent from multiple users with high probability even in the presence of inter-cell interference, while guaranteeing the optimal \emph{throughput scaling}.
Note that during the communication phase, no control signaling from the APs is required, i.e., all the users \emph{independently} perform opportunistic transmission.
To the best of our knowledge, multi-cell random access in a PHY perspective has not been well investigated before in the literature.

Our main results are four-fold and summarized as follows.
\begin{itemize}
 \item A IA-ORA protocol operating in a decentralized fashion is presented in the ultra-dense $K$-cell slotted ALOHA random access network, where both the desired signal power to the serving AP and the generating interference to other APs are judiciously leveraged.
  \item In the network model, it is shown that the aggregate throughput achieved by our IA-ORA protocol scales as $\frac{K}{e} (1-\epsilon) \log (\textsf{snr} \log N)$ in a high signal-to-noise ratio (SNR) regime, provided that $N$ is larger than $\textsf{snr}^{\frac{K-1}{1-\delta}}$ with respect to \textsf{snr} for an arbitrarily small constant $\epsilon>0$ and a constant $0<\delta<1$.
This implies that the proposed IA-ORA protocol is capable of achieving the full multiuser diversity gain as well as the MAC throughput of $\frac{1}{e}$ in a cell, which is the best we can hope for in slotted ALOHA-type random access \cite{Bertsekas1992}.
  \item Our analysis is validated by numerically evaluating the aggregate throughput through intensive computer simulations.
  We evaluate the throughput in \emph{feasible} $N$ regimes.
  Under practical settings, it is also shown that our IA-ORA protocol outperforms the conventional ORA protocol designed for single-AP random access in terms of aggregate throughput for almost all realistic SNR regimes.
  Additionally, to investigate the robustness of our IA-ORA protocol in the presence of channel uncertainty, we perform simulations under the assumption of imperfect partial CAIT.
  It is examined that the IA-ORA protocol with the imperfect partial CAIT achieves comparable performance on the aggregate throughput to the case with the perfect partial CAIT if the amount of uncertainty is below a certain tolerable level, while still outperforming the conventional ORA protocol.
    \item We extend our ORA protocol to a CSMA/CA random access network by presenting a new CSMA/CA protocol with opportunistic BO such that the BO timer of the user with a higher channel gain is accelerated. To examine a meaningful trade-off between the $K$-fold MAC throughput gain by IA-ORA and the collision avoidance by CSMA/CA, we numerically evaluate the performance under practical settings.
\end{itemize}

Some interference management protocols for multi-cell random access that employ IA and/or the AP cooperation \cite{lin2011random, zhou2014bbn} showed implementation successes based on industrial standards such as IEEE 802.11.
Our IA-ORA protocol also sheds important insights into a simple implementation of multi-cell random access, since neither the dimension expansion nor the AP cooperation is required.

\subsection{Organization and Notations}
Section \ref{sec_model} presents system and channel models.
Section \ref{sec_method} describes the proposed IA-ORA protocol.
Section \ref{sec_analysis} shows how to derive the optimal aggregate throughput scaling and the corresponding user scaling law.
Section \ref{sec_result} provides numerical results to validate our analysis.
Section \ref{sec_CSMA/CA} presents an extension of our ORA protocol to CSMA/CA random access.
Section \ref{sec_conclusion} summarizes the paper with some concluding remarks.

\begin{table}[t!]
  \begin{center}
    \caption{Summary of notations.}
    \label{tab_notation}
    \begin{tabular}{l|l}
      \hline
      \textbf{Notation} & \textbf{Description}\\
      \hline
      $K$ & number of cells\\
      $N$ & number of users in a cell\\
      $h_{j \rightarrow k}^i$ &channel coefficient from the $i$th user in the $j$th cell\\
      &to the $k$th AP\\
      $g_{j \rightarrow k}^i$ &channel gain from the $i$th user in the $j$th cell \\
      &to the $k$th AP\\
      $\Phi_G$ &the first threshold associated with the channel gain\\
      & to the belonging AP\\
      $\Phi_I$ &the second threshold associated with the sum of channel gains\\
      & to the other APs\\
      $R$ &PHY data rate for each user\\
      $p$ &transmission probability\\
      $p_s$ &successful decoding probability\\
      $R_\text{sum}$ &aggregate throughput\\
      \hline
    \end{tabular}
  \end{center}
\end{table}

Throughout the paper, $\mathbb{C}$ denotes the field of complex numbers.
We use the following asymptotic notation: $g(x) = \Omega(f(x))$ means that there exist constants $C$ and $c$ such that $g(x) \leq Cf(x)$ for all $x>c$.
Table I summarizes the notations used in this paper. These notations will be formally defined in the following sections when we introduce our system model and technical details.

\section{System and Channel Models} \label{sec_model}
\begin{figure}
\begin {center}
\epsfig{file=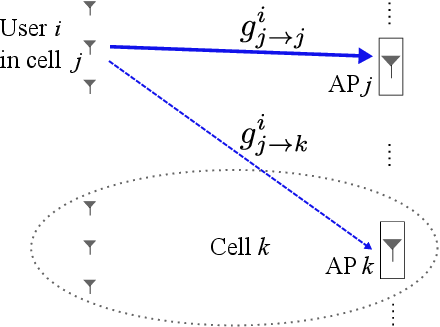, width=0.35\hsize}
\end {center}
\caption{The system model of an ultra-dense $K$-cell random access network with one AP and $N$ users in each cell for large $N$.}
\label{fig_model}
\end{figure}
Let us consider an ultra-dense TDD $K$-cell random access network deploying $K \geq 1$ APs using the same frequency band as shown in Fig.~\ref{fig_model}, where $N$ users are served in each cell and $N$ is sufficiently large. A slotted ALOHA-type protocol is adopted.
No cooperation is assumed among the APs for decoding, i.e., each AP attempts to independently decode the received packets sent from the belonging users.\footnote{Unlike \cite{jakovetic2015cooperative}, we do not assume cooperative decoding among APs, but our developed idea can be extended to to another framework allowing cooperation among APs as future work.}
All the users and APs are equipped with a single antenna. We also make the following assumptions: 1) perfect slot-level synchronization not only between the users and the serving AP but also among the APs;
2) fully-loaded traffic such that each user has a non-empty queue of packets to transmit, similarly as in \cite{li2016maximum}; and 3) transmission of a head-of-line packet with probability $p$ at random, regardless of the number of retransmissions, i.e., each packet is assumed to be the same for all retransmission states.
We adopt a modified signal-to-interference-plus-noise ratio (SINR) model such that each AP is able to decode the received packet if the received SINR exceeds a given decoding threshold, while treating the inter-cell interference (the interfering signals from the inter-cell users) as noise.
If collision occurs due to the concurrent intra-cell transmission when two or more users in the same cell simultaneously transmit, then the corresponding receiver (AP) fails to decode any packet.
That is, in order to simplify system modeling and protocol design, we do not adopt multi-packet reception and multiuser detection studied in \cite{zhang2009does,bae2014achieving}.

Let $\beta_{j \rightarrow k}h_{j \rightarrow k}^i$ denote the channel coefficient from the $i$th user in the $j$th cell to the $k$th AP, which consists of the large-scale path-loss component $0<\beta_{j \rightarrow k} \le 1$ and the small-scale fading component $h_{j \rightarrow k}^i$, following an independent and identically distributed (i.i.d.) complex Gaussian distribution, where $i \in \{1, \cdots, N\}$ and $j, k \in \{1, \cdots, K\}$. For simplicity, we assume that each AP experiences the same degree of path-loss attenuation from the users in the same cell. Especially, when $j=k$, the large-scale term $\beta_{j \rightarrow k}$ is assumed to be 1 since it corresponds to the intra-cell received signal strength, which is much stronger than the inter-cell interference. In the UDN scenario where users in a cell are \emph{densely located}, the large-scale path-loss components from users to APs are assumed to be almost identical. This channel model was also adopted in other studies dealing with throughput scaling analysis in multi-cell UDNs (refer to~\cite{shin2012can,Shin2012network}).

Instead of global CAIT, we assume the partial uplink CAIT that can be acquired via a small amount of feedback information sent by multiple APs; for example, the channel gain from the $i$th user in the $j$th cell to the $k$th AP, denoted by $g_{j \rightarrow k}^i =\beta_{j\rightarrow k}^2 |h_{j \rightarrow k}^i|^2$, is available at the $i$th user.
The partial CAIT $g_{j \rightarrow k}^i$ for $k\in\{1,\cdots,K\}$ can be acquired via the downlink channel due to the reciprocity between uplink and downlink in TDD mode.
Practical CSI acquisition methods were presented in many wireless environments such as multi-cell multi-antenna systems \cite{komulainen2013effective,rogalin2014scalable} and wireless local area networks \cite{IEEE802.11ac, liao2014mu}.
In the seminal study on the ORA with one AP deployment \cite{qin2006distributed}, the partial CAIT and the corresponding distribution information were assumed to be available.
Recently, the CAIT acquisition process was introduced for the multichannel ORA \cite{Khanian2016}.

It is worthwhile to note that since only the \emph{amplitude} information (but not the phase information that needs to be estimated more accurately for satisfactory performance) is required for our protocol, the length of feedback messages can be remarkably shortened via multi-bit quantization, thus resulting in negligible protocol overheads.
Inspired by \cite{qin2006distributed, adireddy2005exploiting, Khanian2016}, an offline channel amplitude acquisition strategy is described under our multi-AP model (i.e., our multi-cell random access model) as follows.
In slotted ALOHA, each AP feeds the short signaling $(0, 1, e)$ back to inform the belonging users of the reception status after each time slot via the downlink channel, where $0$ indicates no packet reception (idle); $1$ indicates reception of only one packet (i.e., successful transmission); and $e$ indicates collision since two or more packets are transmitted simultaneously.
In our multi-cell random access protocol, each AP feeds the short signaling message back in an orthogonal mini-time slot, which needs only a very small amount of coordination among the APs.
That is, coordination among the APs (but not between APs and users) is allowed only for this short feedback message transmission phase so as to avoid collisions.
By exploiting the channel reciprocity between uplink and downlink in TDD systems, each user is capable of estimating the channel amplitudes to multiple APs through the received short signaling messages due to the fact that there inherently exists such a signaling message in the slotted ALOHA transmission protocol \cite{liao2014mu}.

As in the prior studies on the opportunistic transmission \cite{knopp1995information, viswanath2002opportunistic, sharif2005capacity, Shin2012network, shin2012can, qin2006distributed, adireddy2005exploiting, wang2009transmission, moon2016channel, miao2012channel, Khanian2016, liu2015stay}, we assume a quasi-static fading model in which the channel coefficients are constant during one time slot and vary independently in the next time slot.
The received signal $y_k \in \mathbb{C}$ at the $k$th AP is given by
\begin{equation}
\label{eq_rx_signal}
y_k = \underbrace{ \sum_{u_k=1}^{n_k}{ h_{k \rightarrow k}^{\pi(u_k)} x_{k}^{\pi(u_k)}} }_\text{desired signal} + \underbrace{ \sum_{j=1 \atop j\neq k}^{K}{\sum_{u_j=1}^{n_j}{\beta_{j\rightarrow k} h_{j \rightarrow k}^{\pi(u_j)} x_{j}^{\pi(u_j)}} } }_\text{inter-cell interference} + z_k,
\end{equation}
where $x_{k}^{\pi(u_k)}$ is the transmit signal from the $\pi(u_k)$th user in the $k$th cell; the random variable $n_k \sim B(N, p)$ is a binomial integer representing the number of transmitting users in the $k$th cell, and $z_k \in \mathbb{C}$ is the i.i.d. complex additive white Gaussian noise (AWGN) with zero-mean and the variance $N_0$.
For each transmission, there is an average transmit power constraint $\mathbb{E}\left[ \left|x_{k}^{\pi(u_k)}\right|^2 \right] \leq P_\text{TX}$.
The average SNR at each receiver is defined as $\textsf{snr} \triangleq \frac{P_\text{TX}}{N_0}$.

\section{IA-ORA}
\label{sec_method}
In this section, we describe the entire procedure of our IA-ORA protocol, which includes how to select system parameters.
Under the proposed IA-ORA protocol, users opportunistically transmit with a properly selected PHY data rate $R$ if not only the channel gain to the belonging AP exceeds a given threshold $\Phi_G$ but also the sum of channel gains to the other APs is below another given threshold $\Phi_I$.
One of basic ideas behind our protocol is to share the {\em common} data rate $R$ with all users since it may not be possible to set different $R$'s for all users due to the unexpected amount of inter-cell interference at each AP.
Compared to the conventional ORA protocols that aim at enhancing the desired signal power by only using $\Phi_G$, in our ultra-dense multi-cell setup, confining the inter-cell interference leakage to a given sufficiently low level is more crucial to achieve the MAC throughput.
Hence, provided with the channel gains to other-cell APs, we introduce another threshold $\Phi_I$, which is used to mitigate the inter-cell interference leakage by exploiting the opportunism.
Our IA-ORA protocol operates in a decentralized fashion without any additional control signaling from the APs after the initialization phase, and hence fits well into ultra-dense random access networks.
The detailed description of the protocol is elaborated on in the following subsections.

\subsection{Overall Procedure}
\begin{figure}
\begin {center}
\epsfig{file=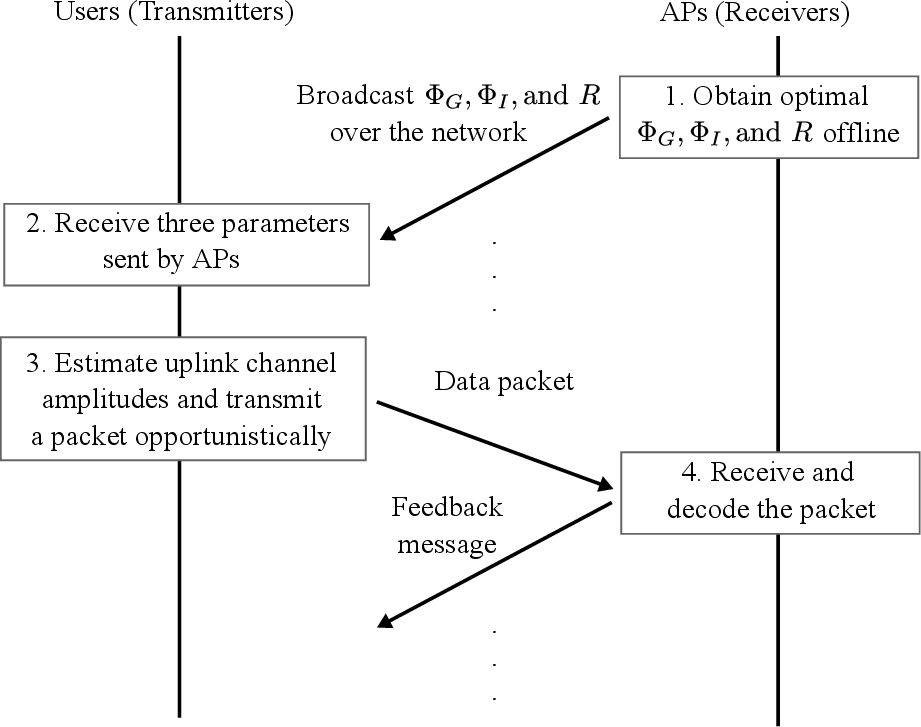, width=0.49\hsize}
\end {center}
\caption{The initialization phase of our IA-ORA protocol.}
\label{fig_flow_diagram}
\end{figure}

In this subsection, we describe the overall procedure of the proposed IA-ORA protocol.
In the initialization phase, the APs share two thresholds $\Phi_G$ and $\Phi_I$ as well as the PHY data rate $R$ with all users for opportunistic transmission in our ultra-dense multi-cell random access network, as illustrated in Fig. \ref{fig_flow_diagram}.
Let us turn to the data communication phase. It is known that the maximum MAC throughput of the conventional slotted ALOHA protocol deploying one AP is achieved at the transmission probability $p = \frac{1}{N}$ for large $N$ (see, e.g., the MAC throughput performance for $N=100$ in Fig. \ref{fig_T}).\footnote{Since only one user is allowed to transmit in a cell, the MAC throughput is expressed as $Np (1-p)^{N-1}$. The probability $p$ maximizing the MAC throughput is given by $p=\frac{1}{N}$.}
Similarly, in our IA-ORA protocol, the transmission probability $p$ averaged out over multiple time slots is also set to $\frac{1}{N}$ to balance between excessive intra-cell collisions or idle time slots, which enables us to discover a relationship between $\Phi_G$ and $\Phi_I$ (to be discussed in Section \ref{subsecOTD}). However, different from the conventional slotted ALOHA, each user makes a decision on whether to transmit depending on its channel conditions.

\begin{figure}
\begin {center}
\epsfig{file=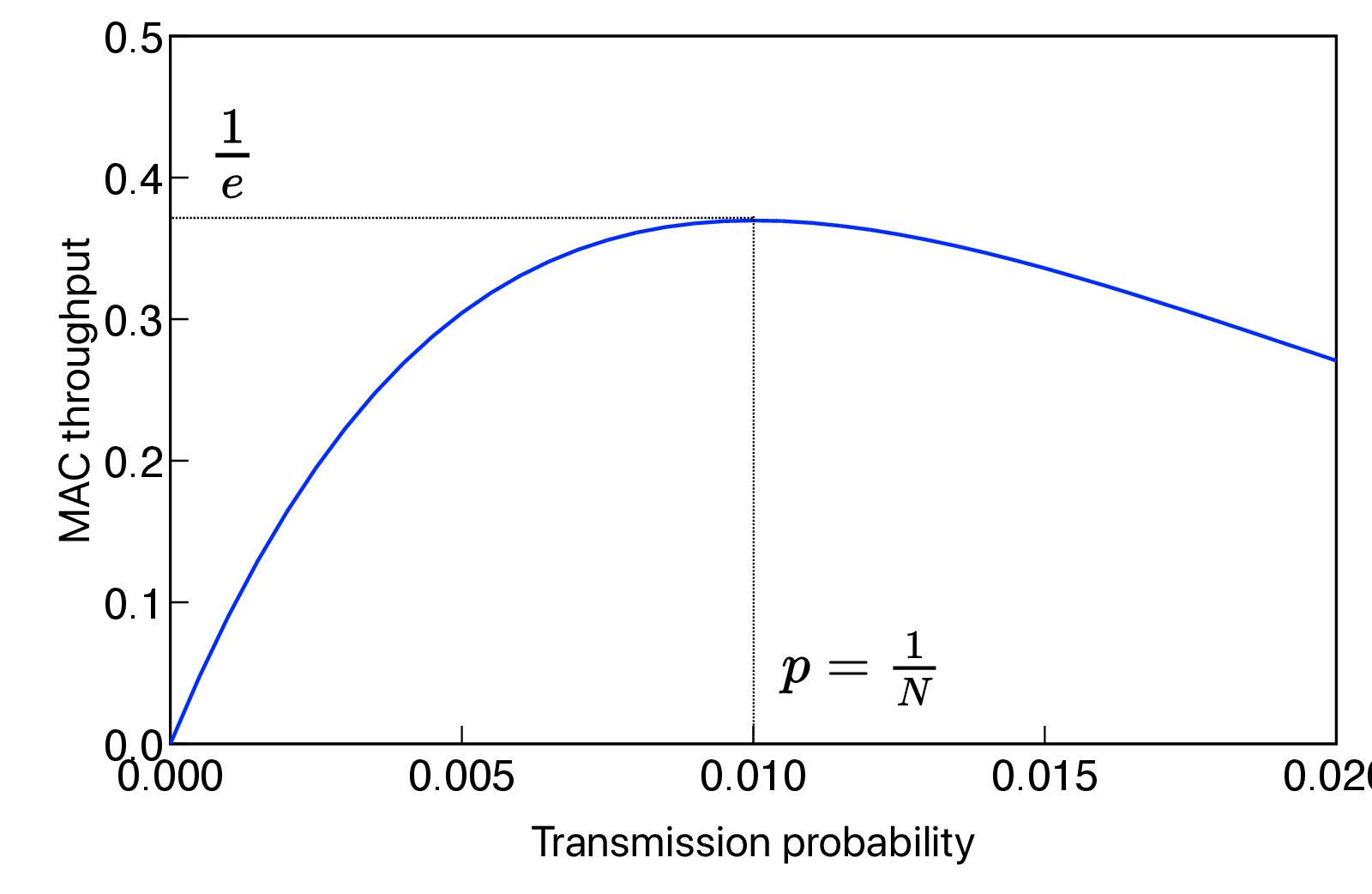, width=0.48\hsize}
\end {center}
\caption{The MAC throughput versus transmission probability $p$, where $N=100$ and the conventional slotted ALOHA with one AP deployment is assumed.}
\label{fig_T}
\end{figure}

In each time slot, each user first estimates the uplink CAIT by using the short signaling feedback messages sent from the APs.
Then, for $i \in \{1, \cdots, N\}$ and $j \in \{1, \cdots, K\}$ (i.e., for all the users), the $i$th user in the $j$th cell compares the channel gains with the two thresholds $\Phi_G$ and $\Phi_I$ to see whether the following two inequalities are fulfilled:
\begin{equation}
\label{eq_thre_G}
g_{j \rightarrow j}^i \geq \Phi_G
\end{equation}
and
\begin{equation}
\label{eq_thre_I}
\sum_{k=1 \atop k \neq j}^{K}{  g_{j \rightarrow k}^i  } \leq \Phi_I,
\end{equation}
where (\ref{eq_thre_G}) indicates a ``good'' channel condition to the serving AP, corresponding to a large desired signal power, and (\ref{eq_thre_I}) indicates a ``weak'' channel condition to the other APs since the inter-cell interference leakage generated by this user is well confined.
From (\ref{eq_thre_G}) and (\ref{eq_thre_I}), it can be seen that only the \emph{amplitude information} of the channels is required.
In each cell, users satisfying both (\ref{eq_thre_G}) and (\ref{eq_thre_I}) start to transmit with the PHY data rate $R$ (to be selected in Section \ref{subsecDRD}), while the other users keep idle in this time slot.
Finally, each AP receives and decodes the desired packet while treating all the interference as noise.
Thanks to the opportunistic transmission, when $N$ is sufficiently large in our ultra-dense random access network, the receivers are capable of successfully decoding their desired packets that were simultaneously sent from multiple users with high probability.

We remark that our IA-ORA protocol operates for general values of $K$.
As a special case, when $K=1$, each user only checks whether the condition (\ref{eq_thre_G}) holds, which corresponds to the conventional ORA protocol.

\subsection{Selection of Two Thresholds} \label{subsecOTD}
In this subsection, we describe how to select the thresholds $\Phi_G$ and $\Phi_I$.
First, according to the two conditions (\ref{eq_thre_G}) and (\ref{eq_thre_I}), the probability that each user succeeds in accessing the channel is given by
\begin{equation} \nonumber
\begin{split}
p &= \Pr\left( g_{j \rightarrow j}^i \geq \Phi_G, \sum_{k = 1 \atop k \neq j}^{K}{  g_{j \rightarrow k}^i  } \leq \Phi_I \right)\\
&= \Pr\left( g_{j \rightarrow j}^i \geq \Phi_G\right) \Pr\left( \sum_{k = 1 \atop k \neq j}^{K}{  g_{j \rightarrow k}^i  } \leq \Phi_I \right),
\end{split}
\end{equation}
where the second equality comes due to the fact that the channel gains to different APs are independent of each other.
Since $p$ is set to $\frac{1}{N}$, we have
\begin{equation} \nonumber
\label{eq_prob_con}
\Pr\left( g_{j \rightarrow j}^i \geq \Phi_G\right) \Pr\left( \sum_{k = 1 \atop k \neq j}^{K}{  g_{j \rightarrow k}^i  } \leq \Phi_I \right) = \frac{1}{N}.
\end{equation}
Then, the relationship between $\Phi_G$ and $\Phi_I$ is given by
\begin{equation} 
\label{eq_GI_relation}
\Phi_G = F_G^{-1} \left(  1 - (F_I(\Phi_I) N )^{-1}  \right),
\end{equation}
which results in $(1-F_G(\Phi_G))F_I(\Phi_I)=\frac{1}{N}$, where $F_G$ and $F_I$ denote the cumulative distribution functions (CDFs) of $g_{j \rightarrow j}^i$ and $\sum_{k = 1 \atop k \neq j}^{K}{  g_{j \rightarrow k}^i  }$, respectively.

As a vital step of our IA-ORA protocol design, the threshold $\Phi_I$ is set to $\textsf{snr}^{-1}$ so that the optimal throughput scaling (i.e., the MAC throughput of $\frac{1}{e}$ in a cell and the power gain) can be achieved with increasing $\textsf{snr}$ for our slotted ALOHA random access (to be proven in Section \ref{sec_analysis}).
Consequently, the two thresholds can be written as
\begin{equation}  \nonumber
\left\{
\begin{aligned}
&\Phi_I = \textsf{snr}^{-1}\\
&\Phi_G = F_G^{-1} \left(  1 - (F_I(\textsf{snr}^{-1}) N )^{-1}  \right).
\end{aligned}
\right.
\end{equation}

\subsection{Selection of PHY Data Rate} \label{subsecDRD}
In this subsection, we describe how to select the PHY data rate $R$ for each user in terms of achieving the optimal throughput scaling law.
We start by computing the resulting successful decoding probability.
Even if an AP receives only one packet from one of the belonging users, the received packet may still be corrupted by not only the noise but also the inter-cell interference.
Thus, the received SINR of the desired packet should exceed a certain decoding threshold, expressed as $2^R - 1$, for successful decoding.
Then, the successful decoding probability $p_s$ is given by

\begin{equation}
\label{eq_succDecPr}
p_s = \Pr\left( \frac{P_\text{TX} g^i_{j \rightarrow j}}{N_0 + \sum\limits_{k=1\atop k \neq j}^{K}{ \sum\limits_{u=0}^{n_k}{P_\text{TX} g_{k \rightarrow j}^{\pi(u)}}} } > 2^R - 1 \right),
\end{equation}
where $n_k \sim B(N, p)$ denotes a binomial random variable representing the number of simultaneously transmitting users in the $k$th cell and $\pi(u)$ denotes the index of transmitting users in each cell.
Using the successful decoding probability in (\ref{eq_succDecPr}), the throughput at the $j$th AP for $j \in \{1, \cdots, K\}$, denoted by $R_\text{sum}^{(j)}$, is given by
\begin{equation}  \nonumber
\label{eq_}
R_\text{sum}^{(j)} = \underbrace{ Np (1-p)^{N-1} }_\text{MAC throughput} \cdot R \cdot p_s,
\end{equation}
where $Np (1-p)^{N-1}$ is the MAC throughput and $R$ is the target PHY data rate.
From the fact that $p = \frac{1}{N}$, the aggregate throughput of the $K$-cell random access network is expressed as
\begin{equation} \label{eq_sum_rate}
\begin{split}
R_\text{sum} &= \sum_{j=1}^{K}{ R_\text{sum}^{(j)} }\\
&=K \left(1-\frac{1}{N} \right)^{N-1} \cdot R \cdot p_s \\
&\geq K \left(1-\frac{1}{N} \right)^{N-1} \cdot R \cdot \Pr\left( \frac{\Phi_G}{\textsf{snr}^{-1} + \tilde{n}\Phi_I} > 2^R - 1 \right),
\end{split}
\end{equation}
where $\tilde{n} \sim B((K-1)N, p)$ is a binomial random variable, representing the total number of interfering signals from the other cells, and thus is given by $\tilde{n} = \sum_{k = 1 \atop k \neq j}^{K}{ n_k }$.
Here, the inequality in (\ref{eq_sum_rate}) holds due to (\ref{eq_thre_G}) and (\ref{eq_thre_I}).

Now, let us turn to computing a lower bound on the modified successful decoding probability, denoted by $\tilde{p}_s$, shown below:
\begin{equation}
\label{eq_qs}
\begin{split}
\tilde{p}_s &= \Pr\left( \frac{\Phi_G}{\textsf{snr}^{-1} + \tilde{n}\Phi_I} > 2^R - 1 \right)\\
&= \sum_{i=0}^{(K-1)N}{\Pr\left( \frac{\Phi_G}{\textsf{snr}^{-1} + i \Phi_I} > 2^R -1 \right) \Pr(\tilde{n} = i)}.
\end{split}
\end{equation}
Let us introduce an integer $\nu \in \{0, 1, \cdots, (K-1)N \}$.
If $R$ is set to a value such that
\begin{equation} \nonumber
\label{ }
\frac{\Phi_G}{\textsf{snr}^{-1} + (\nu+1)\Phi_I} < 2^R-1 \leq \frac{\Phi_G}{\textsf{snr}^{-1} + \nu \Phi_I},
\end{equation}
then the probability $\Pr\left( \frac{\Phi_G}{\textsf{snr}^{-1} + i \Phi_I} \geq 2^R -1 \right)$ is given by $1$ and $0$ for $i \in \{0, 1, \cdots, \nu\}$ and $i \in \{\nu+1, \cdots, (K-1)N\}$, respectively.
Based on this observation, it is possible to partition the entire feasible range of $2^R-1$ (i.e., the decoding threshold) into the following $(K-1)N + 1$ sub-ranges:
\begin{equation}  \nonumber
\begin{split}
&\left\{ \left(0, \frac{\Phi_G}{\textsf{snr}^{-1} + (K-1)N \Phi_I}\right], \cdots, \right .\\
&\left(\frac{\Phi_G}{\textsf{snr}^{-1} + (\nu+1) \Phi_I}, \frac{\Phi_G}{\textsf{snr}^{-1} + \nu \Phi_I}\right], \cdots, \left . \left(\frac{\Phi_G}{\textsf{snr}^{-1}}, \infty \right) \right\}.
\end{split}
\end{equation}
In particular, for $R \in (\frac{\Phi_G}{\textsf{snr}^{-1}}, \infty)$, we have $\tilde{p}_s=0$, which is thus neglected in our work.

Since the term $\Pr\left( \frac{\Phi_G}{\textsf{snr}^{-1} + i \Phi_I} \geq 2^R -1 \right)$ in (\ref{eq_qs}) is an indicator function of $R$, we set $R$ to the maximum value under the condition such that $2^R-1$ lies in each sub-range $\left(\frac{\Phi_G}{\textsf{snr}^{-1} + (\nu+1) \Phi_I}, \frac{\Phi_G}{\textsf{snr}^{-1} + \nu \Phi_I} \right]$, which is given by
\begin{equation}
\label{eq_discrete_R}
R = \log_2\left( 1 + \frac{\Phi_G}{\textsf{snr}^{-1} + \nu \Phi_I}  \right).
\end{equation}
Here, the parameter $\nu \in \{0, 1, \cdots, (K-1)N \}$ in (\ref{eq_discrete_R}) can be interpreted as a tolerable number of interfering signals from other-cell users.
That is, a relatively high successful decoding probability can be guaranteed even when the desired signal at the receiver is interfered by $\nu$ signals caused by the other-cell users.
Note that we can enhance $\tilde{p}_s$ for each transmitted packet at the loss of $R$ (corresponding to a larger $\nu$).
In other words, we can trade a lower PHY data rate for a higher successful decoding probability (refer to (\ref{eq_qs})).
In Section \ref{subsec_preliminaries}, it will be discussed how to set the value of $\nu$ to determine the PHY data rate $R$ in (\ref{eq_discrete_R}) (refer to Remark 2).

\emph{Remark 1:} It is worth noting that the optimal PHY data rate can be found by solving the aggregate throughput maximization problem as follows: $\hat{R}=\argmax_R R_\text{sum}(R)$.
However, as an alternative approach, it is sufficient to set $R$ to (\ref{eq_discrete_R}) in the sense of guaranteeing the optimal \emph{throughput scaling} because i) selecting a proper finite $\nu$ leads to a successful decoding probability $p_s$ approaching $1$, which plays a crucial role in showing the optimality of our protocol in terms of scaling laws, and ii) the expression in (\ref{eq_discrete_R}) is analytically tractable and thus enables us to derive a closed-form expression of the aggregate throughput scaling law.
Detailed analytical discussions are addressed in the next section.

\section{Analysis of Aggregate Throughput Scaling Law} \label{sec_analysis}
In this section, we show that the proposed IA-ORA protocol asymptotically achieves the optimal throughput scaling, i.e., $\frac{K}{e} (1-\epsilon) \log (\textsf{snr}\log N)$ for an arbitrarily small constant $\epsilon > 0$, in the ultra-dense $K$-cell slotted ALOHA random access network if $N$ is greater than a certain level with respect to \textsf{snr}.
To be specific, we first provide some preliminaries including a simplification of $\tilde{p}_s$ and a tractable lower bound on $F_I(x)$.
We then present our main result by analyzing the aggregate throughput scaling under a certain user scaling condition.
Additionally, we present an upper bound on the aggregate throughput scaling and compare the proposed IA-ORA protocol with the existing opportunistic scheduling approach for multi-cell multiple access networks.

\subsection{Preliminaries} \label{subsec_preliminaries}
In this subsection, we introduce two important lemmas, which play an important role in showing our main result.
To establish the first lemma, we revisit the lower bound on the aggregate throughput in (\ref{eq_sum_rate}).
To simplify this lower bound, we derive an explicit expression of $\Phi_G$ as a function of $\Phi_I$ based on the relationship between $\Phi_G$ and $\Phi_I$ in (\ref{eq_GI_relation}).
Here, we make it more explicitly by using the fact that the channel gain $g_{j \rightarrow j}^i$ follows the exponential distribution whose CDF is given by $F_G(x) = 1 - e^{-x}$, resulting in $F_G^{-1}(x)=\ln(\frac{1}{1-x})$. Thus, the relationship between $\Phi_G$ and $\Phi_I$ can be found as follows:
\begin{equation}
\label{eq_phig_phii}
\begin{split}
\Phi_G &= F_G^{-1} \left(  1 - (F_I(\Phi_I) N )^{-1}  \right)\\
&= \ln \left(  \frac{1}{1 - \left( 1 - (F_I(\Phi_I) N )^{-1} \right)}  \right)\\
&= \ln \left( F_I(\Phi_I) N  \right).
\end{split}
\end{equation}
By using (\ref{eq_phig_phii}) in (\ref{eq_sum_rate}), we have
\begin{equation}
\label{eq_bound2}
\begin{split}
R_\text{sum} \geq &K \underbrace{ \left(1-\frac{1}{N} \right)^{N-1} }_\text{MAC throughput} \cdot R\cdot \underbrace{ \Pr\left( \frac{ \ln \left( F_I(\Phi_I) N  \right) }{\textsf{snr}^{-1} + \tilde{n}\Phi_I} > 2^R - 1 \right) }_{\tilde{p}_s}.
\end{split}
\end{equation}

For analytical convenience, the following lemma is presented due to the analytical intractabilty of the resulting form of $\tilde{p}_s$.

\begin{lemma} \label{lemma_discrete_R_Q}
When the PHY data rate $R$ is chosen from the discrete set in (\ref{eq_discrete_R}), it follows that
\begin{equation}
\label{eq_Q}
\tilde{p}_s = I_{1-\frac{1}{N}} \left( (K-1)N-\nu, \nu+1 \right),
\end{equation}
where $I_{x}(y, z)$ is the regularized incomplete beta function.
\end{lemma}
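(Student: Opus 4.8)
The plan is to reduce $\tilde{p}_s$ to a simple tail probability of the binomial random variable $\tilde{n} \sim B\!\left((K-1)N, \frac{1}{N}\right)$ and then invoke the standard identity linking the binomial CDF to the regularized incomplete beta function. First I would substitute the discrete rate (\ref{eq_discrete_R}) into the definition of $\tilde{p}_s$ in (\ref{eq_qs}). Since that choice of $R$ is fixed to the maximum value in its sub-range, it makes the decoding threshold equal to the right endpoint of that interval, i.e. $2^R - 1 = \frac{\Phi_G}{\textsf{snr}^{-1} + \nu\Phi_I}$ exactly. After this substitution, the SINR capture event inside the probability becomes $\frac{\Phi_G}{\textsf{snr}^{-1} + \tilde{n}\Phi_I} \geq \frac{\Phi_G}{\textsf{snr}^{-1} + \nu\Phi_I}$.

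Next, because $\Phi_G > 0$ and $\Phi_I > 0$, the map $i \mapsto \frac{\Phi_G}{\textsf{snr}^{-1} + i\Phi_I}$ is strictly decreasing in its integer argument $i$. Hence the capture event is equivalent to $\textsf{snr}^{-1} + \tilde{n}\Phi_I \leq \textsf{snr}^{-1} + \nu\Phi_I$, that is, to the purely combinatorial event $\{\tilde{n} \leq \nu\}$. This is exactly the indicator behavior already recorded just before (\ref{eq_discrete_R}), where the per-term probability equals $1$ for realizations $i \in \{0,\dots,\nu\}$ and $0$ for $i \in \{\nu+1,\dots,(K-1)N\}$. I therefore conclude $\tilde{p}_s = \Pr(\tilde{n} \leq \nu)$, collapsing the weighted sum in (\ref{eq_qs}) to the first $\nu+1$ binomial mass terms.

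Finally I would express this binomial tail in closed form using the classical identity $\Pr(X \leq k) = I_{1-p}(n - k,\, k+1)$ for $X \sim B(n, p)$ (which follows from integrating the binomial PMF term by term against the beta density, or equivalently from repeated integration by parts on the incomplete beta integral). Taking $n = (K-1)N$, $p = \frac{1}{N}$ so that $1-p = 1 - \frac{1}{N}$, and $k = \nu$, this yields $\tilde{p}_s = I_{1-\frac{1}{N}}\!\left((K-1)N - \nu,\, \nu+1\right)$, which is precisely the claimed expression (\ref{eq_Q}).

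The only delicate point, and what I expect to be the main obstacle, is the treatment of the endpoint. The strict-versus-nonstrict inequality in the capture condition must be reconciled with the half-open sub-range convention $\left(\frac{\Phi_G}{\textsf{snr}^{-1}+(\nu+1)\Phi_I}, \frac{\Phi_G}{\textsf{snr}^{-1}+\nu\Phi_I}\right]$ used to define the discrete rate, so that the reduced event is $\{\tilde{n} \leq \nu\}$ rather than $\{\tilde{n} \leq \nu - 1\}$. Getting this boundary right is exactly what pins down the first argument $(K-1)N - \nu$ (as opposed to $(K-1)N - \nu + 1$) of the incomplete beta function; everything else is a direct application of the monotonicity reduction followed by the binomial--beta identity.
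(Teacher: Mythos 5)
Your proposal is correct and follows essentially the same route as the paper's own proof: both reduce $\tilde{p}_s$ to the binomial tail $\Pr(\tilde{n}\leq\nu)$ by observing that the per-term capture probability is an indicator equal to $1$ for $i\in\{0,\dots,\nu\}$ and $0$ otherwise, and then invoke the standard identity expressing the binomial CDF as the regularized incomplete beta function $I_{1-\frac{1}{N}}\left((K-1)N-\nu,\,\nu+1\right)$. Your explicit attention to the endpoint convention (strict versus nonstrict inequality at $2^R-1=\frac{\Phi_G}{\textsf{snr}^{-1}+\nu\Phi_I}$) is a point the paper glosses over by silently switching between $>$ and $\geq$, but it does not change the argument.
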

\begin{proof}[Proof: \nopunct] We refer to Appendix A.
\end{proof}

\begin{figure}
\begin {center}
\epsfig{file=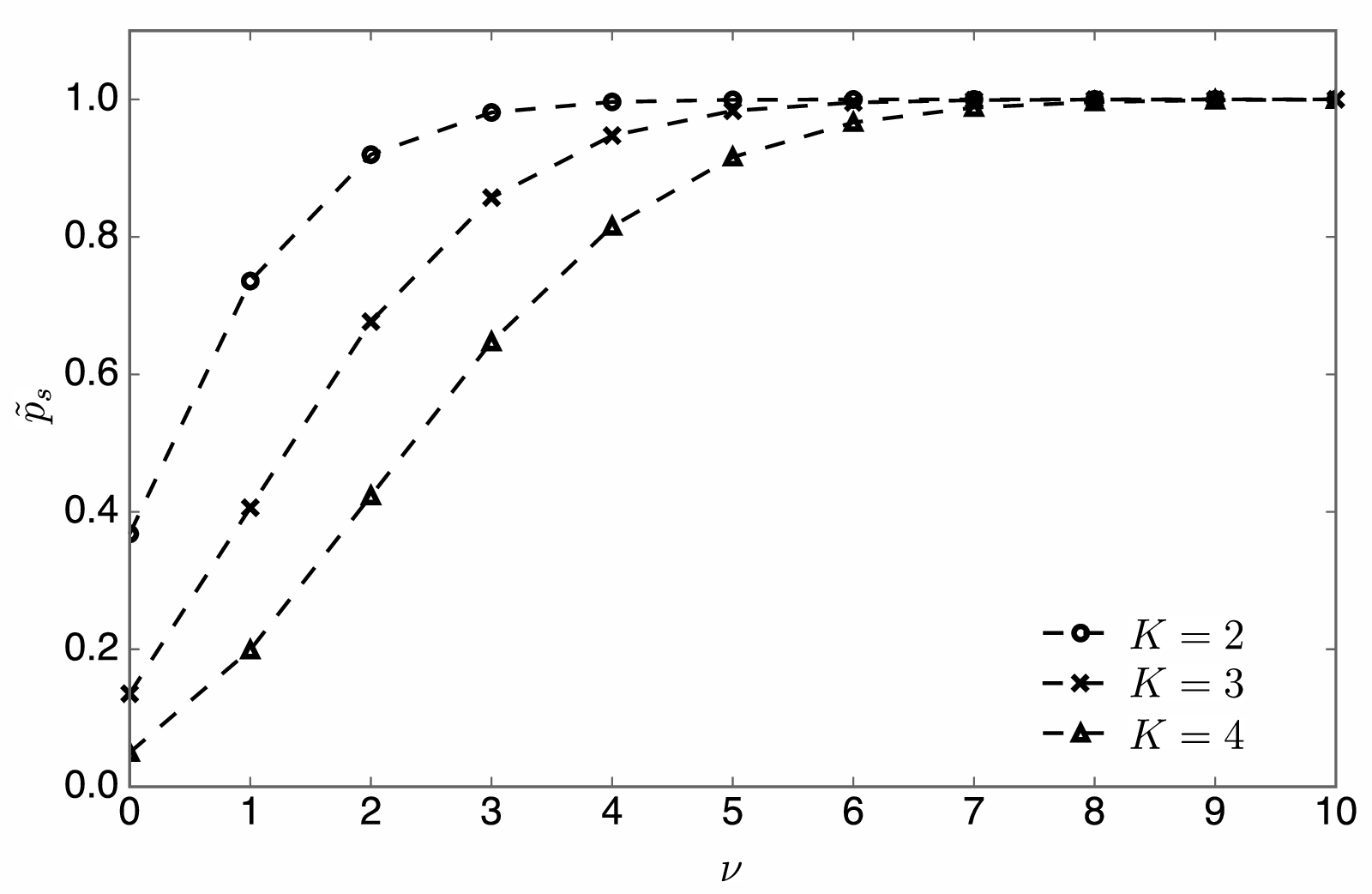, width=0.48\hsize}
\end {center}
\caption{The function $\tilde{p}_s=I_{1-\frac{1}{N}} \left( (K-1)N-\nu, \nu+1 \right)$ versus $\nu$ for $K = \{ 2,3,4 \}$.}
\label{fig_Q}
\end{figure}

From this lemma, it is possible to transform the expression of $\tilde{p}_s$ in (\ref{eq_bound2}) equivalently to the well-known regularized incomplete beta function.
Now, the following interesting observation is made in Remark 2, which enables us to  state how to decide the value of $\nu$.

\emph{Remark 2:} In Fig. \ref{fig_Q}, the function $I_{1-\frac{1}{N}} \left( (K-1)N-\nu, \nu+1 \right)$ is illustrated according to $\nu$ for various $K = \{ 2,3,4 \}$.
This function tends to increase rapidly with $\nu$ for small $\nu$ regimes and then gradually approach one.
That is, it is monotonically increasing with $\nu$; for example, for $K=2$, the pairs of $(\tilde{p}_s, \nu)$ are given by $(0.9, 3), (0.99, 5)$, and $(0.999, 7)$.
This observation can be generalized to any values of $K$.
Thus, we can appropriately select a \emph{finite} value of $\nu$ that makes $\tilde{p}_s$ approach almost one by taking the inverse of the regularized incomplete beta function $I_{1-\frac{1}{N}}(\cdot,\cdot)$.

Using (\ref{eq_discrete_R}), (\ref{eq_bound2}), and (\ref{eq_Q}), the aggregate throughput is now lower-bounded by
\begin{equation}
\label{eq_bound_FI}
\begin{split}
R_\text{sum} \geq &K \left(1-\frac{1}{N}\right)^{N-1} \cdot \underbrace{ \log_2\left( 1 + \frac{ \ln \left( F_I(\Phi_I) N  \right) }{\textsf{snr}^{-1} + \nu \Phi_I}  \right)}_\text{PHY data rate ($R$)}\\
&\cdot I_{1-\frac{1}{N}} \left( (K-1)N-\nu, \nu+1 \right).
\end{split}
\end{equation}

However, motivated by the fact that it is still not easy to find a closed form expression of (\ref{eq_bound_FI}) due to the complicated form of $F_I(\Phi_I)$, we define the CDF of $\sum_{k = 1 \atop k \neq j}^{K}|h_{j \rightarrow k}^i|^2$ as $F_{\tilde{I}}$. Then, since $F_{\tilde{I}}(\Phi_I)\le F_{I}(\Phi_I)$ due to $0<\beta_{j \rightarrow k} \le 1$, the aggregate throughput in (\ref{eq_bound_FI}) is further lower-bounded by
\begin{equation}
\label{eq_bound_FI2}
\begin{split}
R_\text{sum} \geq &K \left(1-\frac{1}{N}\right)^{N-1} \cdot \log_2\left( 1 + \frac{ \ln \left( F_{\tilde{I}}(\Phi_I) N  \right) }{\textsf{snr}^{-1} + \nu \Phi_I}  \right)\\
&\cdot I_{1-\frac{1}{N}} \left( (K-1)N-\nu, \nu+1 \right).
\end{split}
\end{equation}
Now, we introduce the following lemma, which computes a lower bound on $F_{\tilde{I}}(\Phi_I)$ from the fact that the variable $\sum_{k = 1 \atop k \neq j}^{K}|h_{j \rightarrow k}^i|^2$ follows the chi-square distribution with $2(K-1)$ degrees of freedom.
\begin{lemma} \label{lemma_FI_bound}
For any $0 \leq x < 2$, the CDF of $\sum\limits_{k=1 \atop k \neq j}^{K}{ |h_{j \rightarrow k}^i|^2 }$, $F_{\tilde{I}}(x)$, is lower-bounded by
\begin{equation} \nonumber
F_{\tilde{I}}(x) \geq c_1 x^{(K-1)},
\end{equation}
where $c_1 = \frac{e^{-1} 2^{-(K-1)}}{(K-1) \Gamma(K-1)}$ is some constant independent of $N$ and SNR.
Here, $\Gamma(x) = (x-1)!$ is the Gamma function.
\end{lemma}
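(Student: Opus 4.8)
The plan is to exploit the fact that, since each channel coefficient $h_{j\rightarrow k}^i$ is i.i.d. complex Gaussian, each channel gain $g_{j\rightarrow k}^i=|h_{j\rightarrow k}^i|^2$ is exponentially distributed with unit rate (consistent with the earlier identity $F_G(x)=1-e^{-x}$). Hence the quantity of interest, $\sum_{k\neq j} g_{j\rightarrow k}^i$, is a sum of $K-1$ independent unit-rate exponential random variables and therefore follows an Erlang (Gamma) distribution with shape $K-1$. Its CDF then admits the integral representation
\[
F_I(x)=\frac{1}{\Gamma(K-1)}\int_0^x t^{K-2}e^{-t}\,dt,
\]
which I would take as the starting point of the argument (implicitly assuming $K\geq 2$, since for $K=1$ there are no other APs and $F_I$ is vacuous).

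The key idea is to lower-bound this integral by discarding part of the integration domain and retaining only $[0,x/2]$, on which both factors of the integrand can be controlled cleanly. First I would restrict the range of integration to obtain $F_I(x)\geq \frac{1}{\Gamma(K-1)}\int_0^{x/2} t^{K-2}e^{-t}\,dt$. Next, since $0\leq x<2$, every $t$ in $[0,x/2]$ satisfies $t<1$, so that $e^{-t}\geq e^{-1}$; pulling this constant out of the integral isolates the exponential factor $e^{-1}$ appearing in $c_1$. Finally I would evaluate the remaining polynomial integral, $\int_0^{x/2} t^{K-2}\,dt=\frac{(x/2)^{K-1}}{K-1}$, valid for $K\geq 2$, which simultaneously produces the $\frac{1}{K-1}$ factor and the $2^{-(K-1)}$ factor. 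Collecting these pieces yields $F_I(x)\geq \frac{e^{-1}2^{-(K-1)}}{(K-1)\Gamma(K-1)}\,x^{K-1}=c_1 x^{K-1}$, exactly as claimed.

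The only real subtlety — and what I expect to be the crux — is the choice of the truncation point $x/2$ rather than the full interval $[0,x]$. The naive bound using $e^{-t}\geq e^{-x}\geq e^{-2}$ over all of $[0,x]$ would yield only the weaker constant $e^{-2}$ and would miss the $2^{-(K-1)}$ factor entirely. Truncating at $x/2$ is precisely the device that does double duty: the shortened interval manufactures the $2^{-(K-1)}$ term, while simultaneously forcing $t<1$ so that the tighter estimate $e^{-t}\geq e^{-1}$ becomes available. Recognizing that this single choice reproduces the stated constant $c_1$ exactly is the heart of the proof; everything else is an elementary estimate of the lower incomplete gamma integral, with the restriction $x<2$ entering only to guarantee $t<1$ on the truncated domain.
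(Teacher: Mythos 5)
Your proof is correct, and it reconstructs exactly the argument behind the stated constant: the paper itself only cites \cite[Lemma 2]{shin2012can} rather than giving a proof, and the form $c_1 = \frac{e^{-1}2^{-(K-1)}}{(K-1)\Gamma(K-1)}$ is precisely what your truncation of the Erlang CDF integral to $[0,x/2]$ (using $e^{-t}\geq e^{-1}$ there, which is where $x<2$ enters) produces. No gaps; the only cosmetic caveat is to note explicitly that the lemma is vacuous for $K=1$, which you already do.
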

\begin{proof}[Proof: \nopunct]
We refer to \cite[Lemma 2]{shin2012can} for the proof.
\end{proof}
This lemma enables us to find a more tractable lower bound on the aggregate throughput, which will be analyzed in the next subsection.

\subsection{Aggregate Throughput Scaling and User Scaling Laws} \label{subsec_theorem}
As our main result, we establish the following theorem that characterizes the aggregate throughput scaling achieved by our IA-ORA protocol.

\begin{theorem} \label{theorem_scaling}
Consider the IA-ORA protocol in the ultra-dense $K$-cell slotted ALOHA random access network.
Suppose that $\Phi_I = \textsf{snr}^{-1}$, and $\Phi_G$ and $R$ are set to (\ref{eq_phig_phii}) and (\ref{eq_discrete_R}), respectively.
Then, an aggregate throughput scaling of
\begin{equation} 
\label{eq_scaling}
\frac{K}{e} (1-\epsilon) \log (\textsf{snr}\log N)
\end{equation}
is achieved by the IA-ORA protocol with high probability in the high SNR regime provided that
\begin{equation} 
\label{eq_condition}
N = \Omega \left( \textsf{snr}^{\frac{K-1}{1-\delta}} \right),
\end{equation}
where $\epsilon> 0$ is an arbitrarily small constant and $0<\delta<1$ is some constant.
\end{theorem}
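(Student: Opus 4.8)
The plan is to start directly from the lower bound (\ref{eq_bound_FI}) on $R_\text{sum}$, which factors as the product of the MAC-throughput term $K(1-1/N)^{N-1}$, the PHY data rate $R=\log_2(1+\ln(F_I(\Phi_I)N)/(\textsf{snr}^{-1}+\nu\Phi_I))$, and the decoding term $\tilde{p}_s=I_{1-1/N}((K-1)N-\nu,\nu+1)$. I would bound each factor separately: show the first is at least $K/e$, push the third arbitrarily close to $1$ by a suitable \emph{finite} choice of $\nu$, and extract the $(1-\epsilon)\log(\textsf{snr}\log N)$ scaling from the second under the user scaling condition (\ref{eq_condition}). Multiplying the three bounds and absorbing the accumulated small losses into a single $\epsilon$ then recovers (\ref{eq_scaling}).

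For the MAC-throughput term I would use that $(1-1/N)^{N-1}$ is monotonically decreasing to $e^{-1}$, hence bounded below by $e^{-1}$ for every $N$, so this factor contributes at least $K/e$ with no asymptotic loss. For the decoding term I would invoke Lemma \ref{lemma_discrete_R_Q} together with the integer-parameter identity for the regularized incomplete beta function to rewrite $\tilde{p}_s=\Pr(\tilde{n}\le\nu)$ with $\tilde{n}\sim B((K-1)N,1/N)$. Since $(K-1)N\cdot(1/N)=K-1$ is constant, $\tilde{n}$ converges in distribution to a Poisson random variable with mean $K-1$, so $\Pr(\tilde{n}\le\nu)\to\Pr(\mathrm{Poisson}(K-1)\le\nu)$, which tends to $1$ as $\nu$ grows (this is exactly the monotone behavior observed in Remark~2). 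Thus for any target $\epsilon'>0$ I can fix a finite $\nu$ depending only on $K$ and $\epsilon'$ (and not on $N$ or \textsf{snr}) so that $\tilde{p}_s\ge 1-\epsilon'$, which keeps the $1/(1+\nu)$ factor appearing inside $R$ a harmless constant.

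The core of the argument is the rate factor $R$. Substituting $\Phi_I=\textsf{snr}^{-1}$ makes the denominator equal to $(1+\nu)\textsf{snr}^{-1}$, so the argument of the logarithm becomes $\frac{\textsf{snr}}{1+\nu}\ln(F_I(\textsf{snr}^{-1})N)$. Applying Lemma \ref{lemma_FI_bound} with $x=\textsf{snr}^{-1}<2$ (valid in the high-SNR regime) gives $F_I(\textsf{snr}^{-1})\ge c_1\,\textsf{snr}^{-(K-1)}$, whence $\ln(F_I(\textsf{snr}^{-1})N)\ge \ln N-(K-1)\ln\textsf{snr}+\ln c_1$. Here the user scaling condition (\ref{eq_condition}) is essential: it yields $\ln N\ge\frac{K-1}{1-\delta}\ln\textsf{snr}+O(1)$, equivalently $(K-1)\ln\textsf{snr}\le(1-\delta)\ln N+O(1)$, so the interference penalty $-(K-1)\ln\textsf{snr}$ is dominated and $\ln(F_I(\textsf{snr}^{-1})N)\ge \delta\ln N-O(1)=\Theta(\ln N)$. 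Consequently the argument of the outer logarithm is $\Omega(\textsf{snr}\ln N)$, giving $R\ge\log_2(1+\frac{\delta\,\textsf{snr}\ln N}{1+\nu})\ge\log_2(\textsf{snr}\ln N)-O(1)$; since the $O(1)$ term (which also absorbs the $\ln N$ versus $\log N$ and the $\delta/(1+\nu)$ discrepancies) is $o(\log(\textsf{snr}\log N))$ in the high-SNR regime, I obtain $R\ge(1-\epsilon)\log(\textsf{snr}\log N)$ for any fixed $\epsilon>0$ once \textsf{snr} is large enough.

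Finally I would combine the three bounds to get $R_\text{sum}\ge \frac{K}{e}\,(1-\epsilon)(1-\epsilon')\log(\textsf{snr}\log N)$ and merge the losses into one arbitrarily small constant to recover (\ref{eq_scaling}); the qualifier ``with high probability'' is accounted for by the event $\{\tilde{n}\le\nu\}$, which holds with probability $\tilde{p}_s\to1$. I expect the main obstacle to be the third step, specifically making the interplay between the interference penalty $\textsf{snr}^{-(K-1)}$ from Lemma \ref{lemma_FI_bound} and the user scaling exponent $\frac{K-1}{1-\delta}$ precise enough that $\ln(F_I(\textsf{snr}^{-1})N)$ stays a strictly positive fraction of $\ln N$. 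This is precisely what forces the condition (\ref{eq_condition}) and pins down the admissible relationship between $\delta$ and the achievable $\epsilon$, whereas the MAC-throughput and decoding factors reduce to routine limits.
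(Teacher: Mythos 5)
Your proposal is correct and follows essentially the same route as the paper's proof in Appendix B: start from (\ref{eq_bound_FI}), lower-bound $(1-1/N)^{N-1}$ by $1/e$, fix a finite $\nu^*$ making $\tilde{p}_s = 1-\epsilon$, apply Lemma \ref{lemma_FI_bound} with $x=\textsf{snr}^{-1}$, and impose $c_1 \textsf{snr}^{-(K-1)} N \geq N^{\delta}$ to obtain the user scaling condition (\ref{eq_condition}) and the final bound (\ref{eq_finalBound}). Your Poisson-limit justification for why a finite $\nu$ independent of $N$ and $\textsf{snr}$ suffices is a slightly more rigorous substitute for the paper's numerical observation in Remark 2, but the decomposition and all key steps are otherwise identical.
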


\begin{proof}[Proof: \nopunct] We refer to Appendix B.
\end{proof}

Theorem \ref{theorem_scaling} indicates that our IA-ORA protocol can achieve not only the near $\frac{K}{e}$ MAC throughput (corresponding to the pre-log term in (\ref{eq_scaling})) but also the power gain of $\log \log N$ in multi-cell random access.
In particular, the power gain comes from the threshold setting of $\Phi_G$ in (\ref{eq_thre_G}), which enables us to guarantee a high channel gain to the belonging AP.
To obtain such gains, the inter-cell interference leakage caused by a certain user needs to be well confined by setting $\Phi_I = \textsf{snr}^{-1}$.
This implies that the rigid user scaling condition in (\ref{eq_condition}) is necessary to confine the total interference to a fixed level.
Based on the analytical result in Theorem \ref{theorem_scaling}, the following interesting observations are provided with respect to the parameters $K$, $\delta$, and $\epsilon$.

\emph{Remark 3:} According to (\ref{eq_scaling}) and (\ref{eq_condition}), it is found that the aggregate throughput scaling increases linearly with $K$ at the cost of more strict user scaling law (note that $N$ needs to increase exponentially with $K$ for given $\textsf{snr}$).
In ultra-dense random access networks with many users, our IA-ORA protocol can provide higher MAC throughput by deploying more APs.

\emph{Remark 4:} We now turn to the effect of $\delta$ on the aggregate throughput performance.
Although increasing $\delta$ leads to an increment of the aggregate throughput (refer to (\ref{eq_finalBound}) in Appendix B), it does not fundamentally change the aggregate throughput \emph{scaling law}.
However, increasing $\delta$ yields a higher number of users to guarantee the performance.
Thus, it is sufficient to properly select a value of $0<\delta<1$ according to the given network condition.

\emph{Remark 5:} The pre-log term $1-\epsilon$ represents the successful decoding probability, which can be regarded as a penalty of random access without any coordination.
This penalty cannot be totally resolved because there is always a non-zero probability of unsuccessful decoding caused by the excessive inter-cell interference generated by random transmission (refer to (\ref{eq_Q_star}) in Appendix B).
This is a distinctive phenomenon of multi-cell \emph{random access}, compared to the multi-cell multiple access scenario in which each base station performs user scheduling.

To increase the successful decoding probability, we can decrease $\epsilon$ by selecting a larger $\nu^*$ according to (\ref{eq_Q}).
However, this results in a lower PHY data rate $R$ in (\ref{eq_discrete_R}) that affects the resulting aggregate throughput.
Thus, we need to carefully balance this trade-off between $\epsilon$ and $\nu^*$ when selecting system parameters in practice.
Nevertheless, it is sufficient to assume a finite $\nu^*$ leading to an arbitrarily small $\epsilon>0$ to analyze our throughput scaling result.

\subsection{Discussions} \label{analysis_discussion}
In this subsection, we first present an upper bound on the aggregate throughput scaling that matches our analytically achievable result.
Then, we compare the proposed IA-ORA protocol with the opportunistic scheduling protocol for the multi-cell multiple access \cite{shin2012can}.

\emph{An upper bound on the aggregate throughput scaling:} A genie-aided removal of all the inter-cell interference leads to an aggregate throughput scaling of $\frac{K}{e} \log (\textsf{snr}\log N)$, which is explained as follows. Consider an ideal scenario for the multi-cell random access network where concurrent interfering signals sent from other cells are completely canceled out.
This can be thought of as a system consisting of $K$ \emph{interference-free} parallel random access networks.
Since the throughput of $\frac{1}{e} \log (\textsf{snr}\log N)$ is achieved in each cell of such networks \cite{qin2006distributed}, the aggregate throughput scales as $\frac{K}{e} \log (\textsf{snr}\log N)$.
This corresponds to an upper bound on the aggregate throughput scaling for the $K$-cell random access network, which matches the achievable aggregate throughput scaling in (\ref{eq_scaling}) to within a factor of $\epsilon>0$.

\emph{Comparison with the opportunistic scheduling protocol for the multi-cell multiple access:} For the multi-cell multiple access in which collisions can be avoided by adopting user scheduling at each base station \cite{shin2012can}, the opportunistic scheduling protocol achieves the aggregate throughput scaling of $K \log (\textsf{snr}\log N)$. By comparing our IA-ORA protocol with this scheduling protocol, some interesting consistency can be found as follows:
1) user scaling laws required by the two protocols are identical and 2) under this user scaling condition, the two protocols achieve the same throughput scaling to within an arbitrary small gap $\epsilon>0$ after removing the intra-cell contention loss factor (note that the MAC throughput in a cell is given by $1$ and $\frac{1}{e}$ in multiple access and slotted ALOHA random access, respectively).
Hence, essential similarities are revealed by applying opportunism to both multiple access and random access networks.

In contrast to the observed similarities, there are fundamental differences between the two protocols.
Under the opportunistic scheduling protocol in \cite{shin2012can}, users in each cell who fulfill the given channel conditions send transmission requests to the base station.
Then, each base station randomly selects one user and informs the user of the PHY data rate for transmission via the downlink channel.
On the other hand, under our IA-ORA protocol, without controlling from the APs, each user determines whether to transmit by comparing its own channel gains with the two thresholds.
Besides, the PHY data rate does not need to be sent from the APs since it is already broadcast in the initialization phase and keeps the same afterwards.
These features make our IA-ORA protocol suitable for uncoordinated random access networks.

\section{Numerical Evaluation} \label{sec_result}
In this section, we validate the proposed IA-ORA protocol by intensively performing numerical evaluation through Monte-Carlo simulations, where the channels in (\ref{eq_rx_signal}) are randomly generated $1 \times 10^5$ times for each system parameter.
First, our analysis is validated by comparing the aggregate throughput through numerical evaluation with the corresponding theoretical one.
Performance of the proposed IA-ORA protocol is also compared with that of the single-cell ORA protocol \cite{qin2006distributed} and the slotted ALOHA protocol \cite{Bertsekas1992} in realistic environments.
Additionally, performance of both the IA-ORA and ORA protocols is evaluated under the assumption of imperfect partial CAIT.

\subsection{Validation of Analytical Results}
In this subsection, we evaluate the aggregate throughput~[bps/Hz] of our IA-ORA protocol in Sections \ref{sec_method} and \ref{sec_analysis} to validate our analytical result in an ultra-dense multi-cell random access setup.
By setting $\epsilon = 0.01$ and $\delta = 0.1$, the parameter $\nu$ can be found according to the relationship between two parameters $\epsilon$ and $\nu$ in (\ref{eq_Q_star}) of Appendix B.
We assume that $\Phi_I = \textsf{snr}^{-1}$ and $\Phi_G = \ln(F_I(\Phi_I)N)$ (see (\ref{eq_phig_phii})).
Based on these parameters, the PHY data rate $R$ can be computed from (\ref{eq_discrete_R}).

In Fig. \ref{fig_Rsum_SNR_ana}, the aggregate throughput achieved by the IA-ORA protocol versus $\textsf{snr}$ in dB scale is illustrated for $K = \{2,3\}$.
The parameter $N$ is set to different scalable values depending on $\textsf{snr}$, i.e., $N = \textsf{snr}^{\frac{K-1}{1-\delta}}$ in (\ref{eq_condition}).
In the figure, the dotted lines are also plotted from the analytical result in Theorem \ref{theorem_scaling} with a proper bias to check the slopes of $\frac{K}{e} (1-\epsilon) \log (\textsf{snr}\log N)$ for $K = \{2,3\}$.
It is seen that the slopes of the numerically found curves coincide with the theoretical ones in the high SNR regime. These empirical results are sufficient to justify our analysis (i.e., the throughput scaling under the given user scaling condition) in Section \ref{sec_analysis}.

\begin{figure}
\begin {center}
\epsfig{file=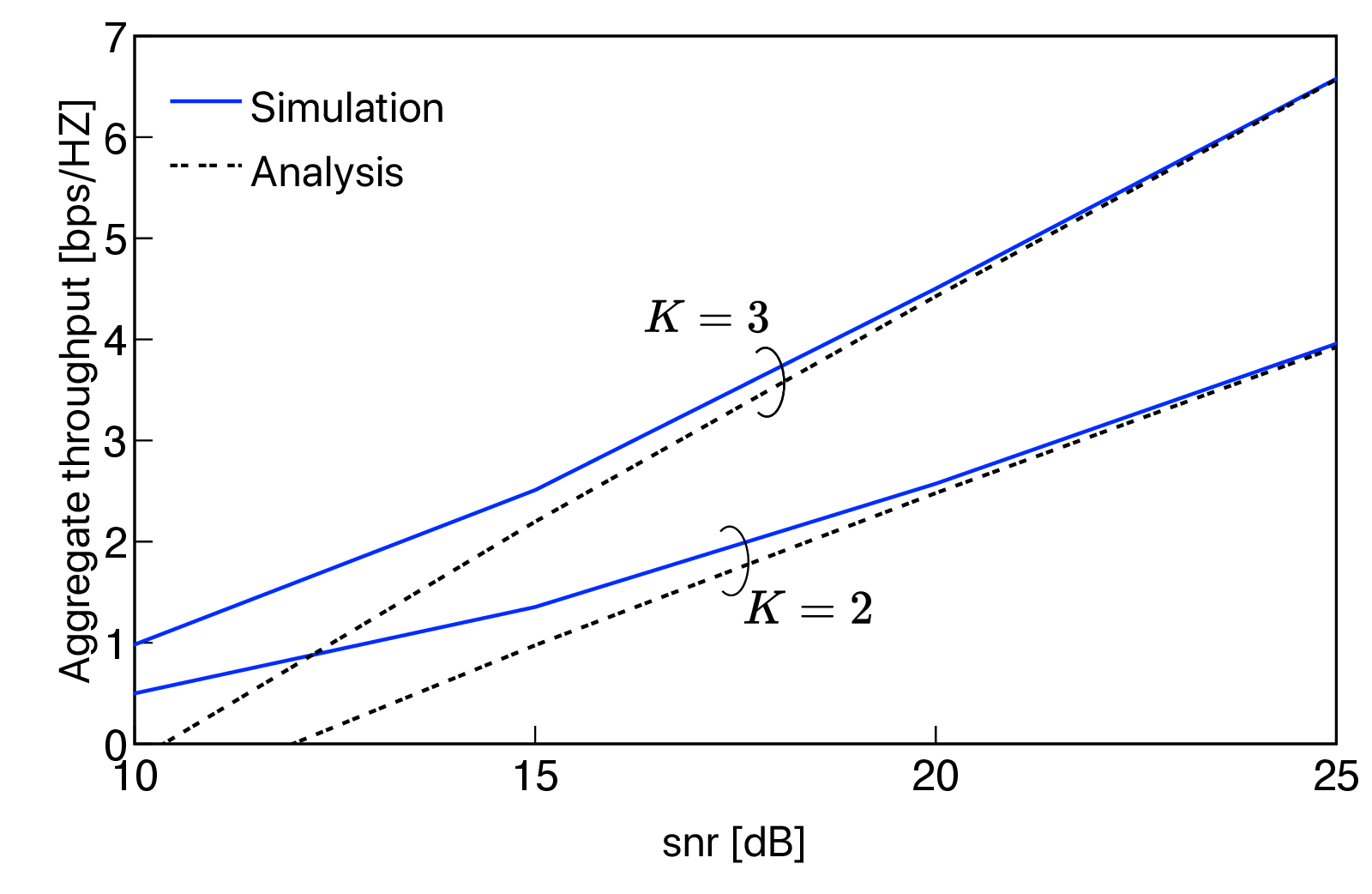, width=0.48\hsize}
\end {center}
\caption{The aggregate throughput versus SNR, where the proposed IA-ORA protocol is employed as $N$ scales according to the user scaling condition.}
\label{fig_Rsum_SNR_ana}
\end{figure}

Now, in Fig. \ref{fig_Rsum_varyN}, the aggregate throughput versus $N$ is illustrated, where $K=2$ and $\textsf{snr} = \{ 10, 15, 20, 25 \}$dB.\footnote{Even if it seems unrealistic to have a large number of users even in ultra-dense random access models, the wide range of parameter $N$ is adopted to precisely see some trends of curves varying with $N$.}
We adopt the aforementioned parameter setting except for $\delta$ that is used to specify the user scaling condition in (\ref{eq_condition}).
One can see that all the curves tend to increase at most \emph{logarithmically} with $N$ owing to the multiuser diversity gain, which is consistent with the throughput scaling in (\ref{eq_scaling}).
It is also observed that for large $N$, increasing the $\textsf{snr}$ leads to superior performance on the aggregate throughput due to the power gain.
As illustrated in the figure, it is worthwhile to note that this superior throughput performance can be achieved when $N$ is sufficiently large (or the user scaling law in (\ref{eq_condition}) is fulfilled).
If $N$ is very small, then increasing the $\textsf{snr}$ does not leads to better performance since the inter-cell interference leakage is not well confined.

\begin{figure}
\begin {center}
\epsfig{file=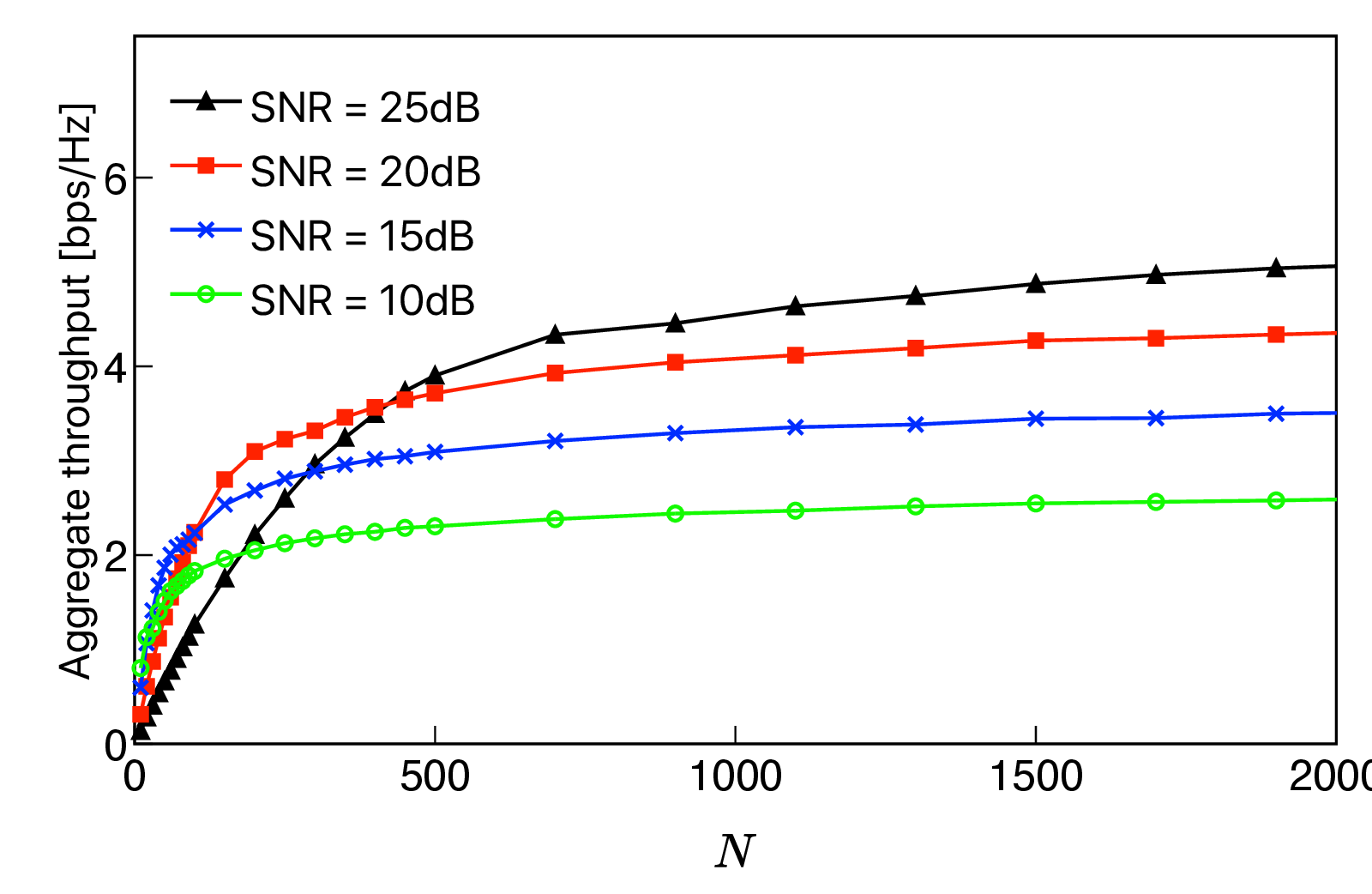, width=0.48\hsize}
\end {center}
\caption{The aggregate throughput versus $N$, where the proposed IA-ORA protocol is employed for $K=2$.}
\label{fig_Rsum_varyN}
\end{figure}

\subsection{Performance Evaluation in Practical Settings}
In this subsection, to demonstrate the effectiveness of our IA-ORA protocol in practice, performance on the aggregate throughput is evaluated in \emph{feasible} $N$ regimes.
More specifically, we evaluate the performance of our protocol when $N$ scales much slower than the condition in (\ref{eq_condition}).
Under this practical setting, instead of using the original IA-ORA protocol, the IA-ORA protocol in Sections \ref{sec_method} and \ref{sec_analysis} is slightly modified in such a way that the optimal parameters $\Phi_G^\ast$ and $R^\ast$ are numerically found in terms of maximizing the resulting aggregate throughput $R_\text{sum}(\Phi_G, R)$.\footnote{Note that the IA-ORA protocol needs to be slighted modified when it is evaluated in practical settings since it is inherently designed for asymptotically achieving the optimal throughput scaling.} That is, we aim to find
\begin{equation} \label{eq_prac_opt}
(\Phi_G^\ast, R^\ast) = \underset{ \Phi_G, R }{\operatorname{argmax}} \quad { R_\text{sum}(\Phi_G, R) }.
\end{equation}
The corresponding $\Phi_I^\ast$ can be found by using the relationship in (\ref{eq_phig_phii}).
The optimal $(\Phi_G^\ast, R^\ast)$ can be found via exhaustive search for given parameter configuration including $K, N$, and $\textsf{snr}$.
The optimal values of $(\Phi_G^\ast, R^\ast)$ are summarized in Table \ref{table_para}.
From the table, some insightful observations are made as follows.
For given $\textsf{snr}$ and $N$, the optimal PHY data rate $R^\ast$ tends to decrease with $K$.
This is because more inter-cell interference is generated for larger $K$, thus leading to a lower SINR at the receivers and the resulting lower $R^\ast$.
Moreover, for given $\textsf{snr}$ and $K$, $R^\ast$ tends to increase with $N$.
This comes from the fact that for larger $N$, the inter-cell interference can be better mitigated due to the multiuser diversity gain, thereby leading to a higher SINR at the receivers.
This enables us to adopt a higher $R^\ast$.
  
  \begin{table}[!t]
  \caption{The optimal values of $(\Phi_G^\ast, R^\ast)$ according to various settings of $K$, $N$, and $\textsf{snr}$.}
  \label{table_para}
  \centering 
  \begin{tabular}{c | c | c | c}
  \hline
  $K \backslash N$ & 50 & 100 & 200\\
  \hline
  \hline
   \multicolumn{4}{ c }{$\textsf{snr}$ = 10dB}\\
   \hline
   2 & (1.40, 2.32) & (1.70, 3.64) & (2.20, 3.91)\\
   3 & (1.70, 2.14) & (1.90, 2.45) & (2.30, 2.72)\\
   4 & (1.90, 1.54) & (2.10, 1.60) & (2.20, 2.02)\\
   \hline
   \multicolumn{4}{ c }{$\textsf{snr}$ = 20dB}\\
   \hline
   2 & (0.90, 4.40) & (0.80, 4.88) & (1.10, 5.61)\\
   3 & (1.70, 2.33) & (1.60, 2.77) & (1.30, 3.18)\\
   4 & (1.70, 1.70) & (1.90, 1.89) & (2.20, 2.09)\\
  \hline
  \end{tabular}
\end{table}

\begin{figure}
\begin {center}
\epsfig{file=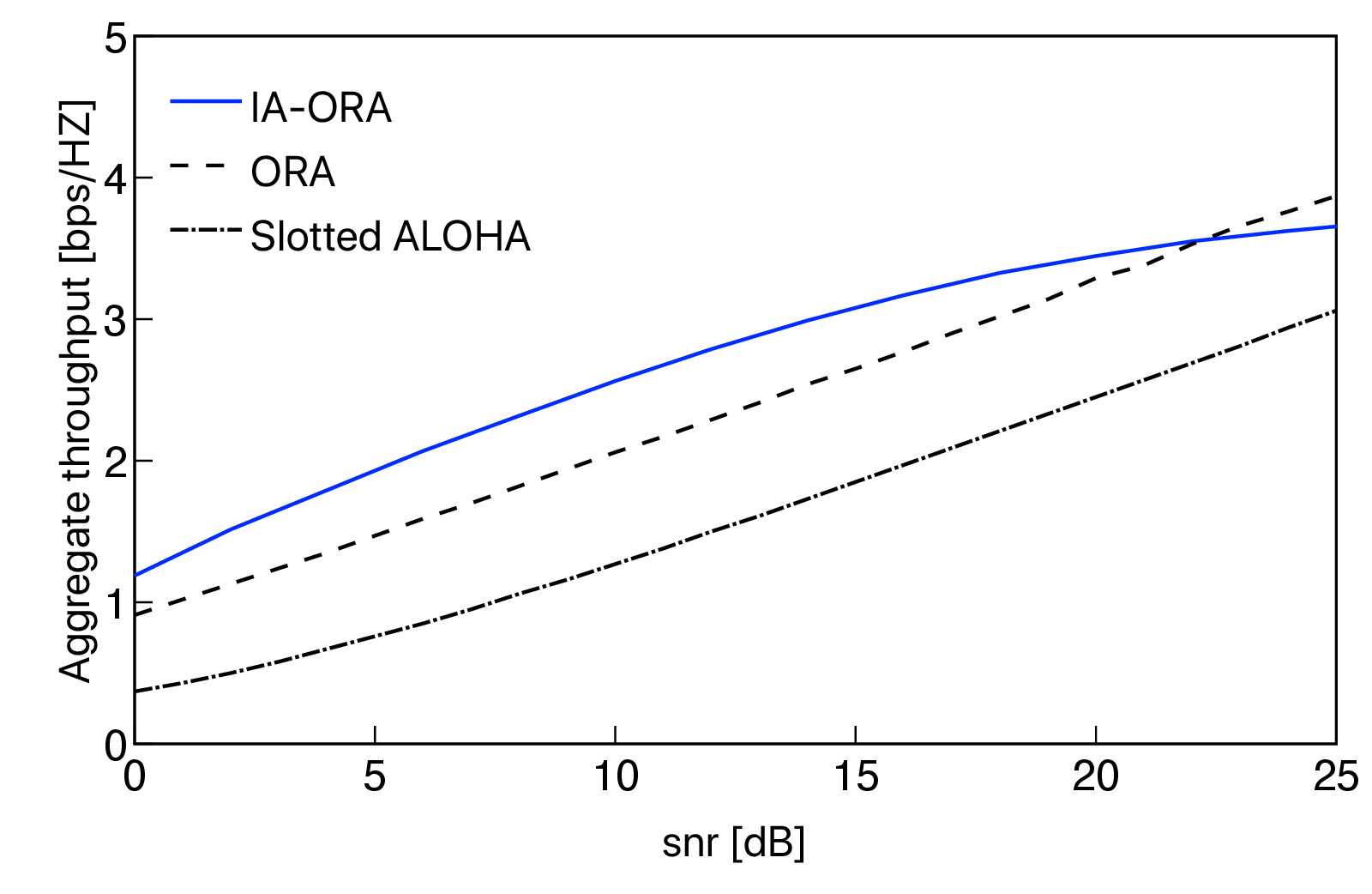, width=0.48\hsize}
\end {center}
\caption{The aggregate throughput versus $\textsf{snr}$, where the proposed IA-ORA protocol ($K=2$) as well as the conventional ORA and slotted ALOHA protocols are employed for $N=100$.}
\label{fig_Rsum_N100_K2_sim}
\end{figure}

In Fig. \ref{fig_Rsum_N100_K2_sim}, the aggregate throughput of our IA-ORA protocol versus $\textsf{snr}$ in dB scale is illustrated for $K=2$ and $N=100$.
As benchmark schemes, performance of the ORA and slotted ALOHA protocols is also illustrated in the figure. In the conventional ORA protocol, each user performs opportunistic transmission with the PHY data rate of $\log_2 \left(1 + \frac{\Phi_G P_\text{TX}}{N_0}\right)$ if its uplink channel gain exceeds $\Phi_G$ \cite{qin2006distributed}. Note that the ORA protocol can be regarded as a special case of our IA-ORA protocol by setting $K=1$ and thus leads to the worse performance when $K \geq 2$.
In the slotted ALOHA protocol, since both implementation details of the PHY coding\slash processing and the effects of fading are neglected in the protocol design phase, the PHY data rate of $\log_2(1+\textsf{snr})$ is adopted for fair comparison.
It is observed that the IA-ORA protocol is superior to the slotted ALOHA and ORA protocols in terms of the throughput in the low and moderate SNR regimes, and then converges to a certain value in the high SNR regime.
The throughput saturation for high SNR is because the MAC throughput and the multiuser diversity gain are not fully achieved due to the limited $N$ in the network suffering from the severe inter-cell interference, where $N$ is set to $100$ and is less than the value computed by our user scaling law in~(\ref{eq_condition}).
In other words, if $N$ is set to a higher value that scales with \textsf{snr}, then the resulting throughput achieved by our IA-ORA protocol will not experience this apparent saturation in the high SNR regime, which implies that the IA-ORA protocol will be still dominant.

\begin{figure}
\begin {center}
\epsfig{file=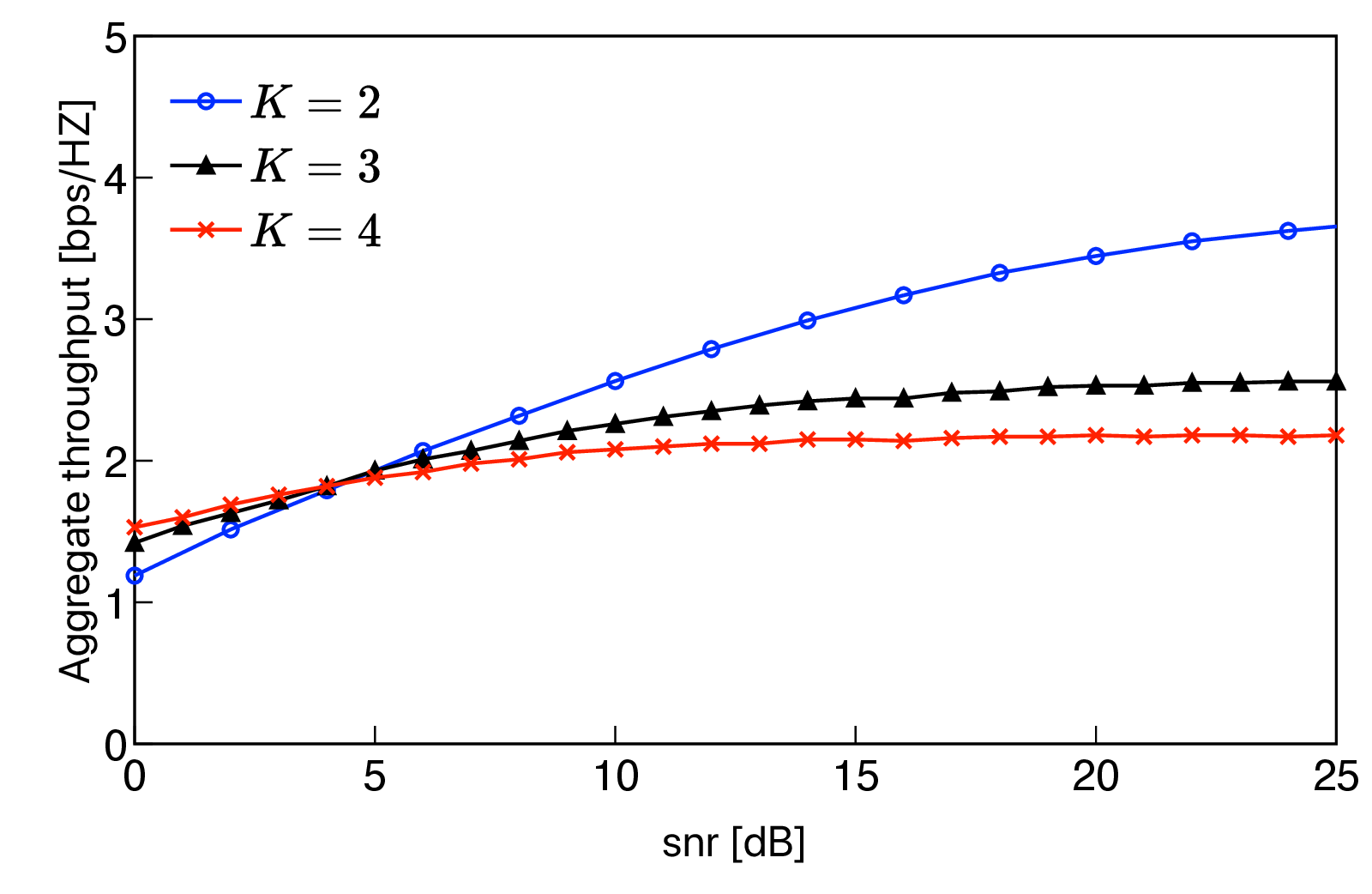, width=0.48\hsize}
\end {center}
\caption{The aggregate throughput versus $\textsf{snr}$, where the proposed IA-ORA protocol is employed for $K=\{2,3,4\}$ and $N=100$.}
\label{fig_Rsum_N100_K_sim}
\end{figure}

In Fig. \ref{fig_Rsum_N100_K_sim}, the aggregate throughput of our IA-ORA protocol versus $\textsf{snr}$ in dB scale is illustrated for $K = \{2, 3, 4\}$ and $N = 100$.
It is seen that the case of a larger $K$ leads to better performance on the aggregate throughput in the low SNR regime, but results in earlier throughput satuation with increasing $\textsf{snr}$ due to a \emph{more strict user scaling law} in Theorem 1, indicating that $N$ needs to increase exponentially with $K$ for given $\textsf{snr}$.
In other words, the curve for $K=2$ achieves inferior performance to the other curves for $K = \{3,4\}$ in the low SNR regime, whereas it tends to increase steadily with $\textsf{snr}$ and then becomes saturated at a relatively high $\textsf{snr}$ point.
Thus, for the plot showing the aggregate throughput of the IA-ORA and ORA protocols versus $\textsf{snr}$, it is also of importance to investigate a crossover at which two curves meet.
The crossover $\textsf{snr}$ and the resulting aggregate throughput $R_\text{sum}$ (i.e., ($R_\text{sum}$, $\textsf{snr}$)) are summarized in Table \ref{table_crossover} according to various settings of $K$ and $N$. From the table, it is observed that i) the crossover SNR point tends to decrease (i.e., the IA-ORA protocol becomes dominant in the relatively low SNR regime) with increasing $K$ for given $N$ since the IA-ORA protocol experiences more inter-cell interference that has not been sufficiently suppressed due to the limited $N$ and ii) the crossover SNR point tends to rise steadily with increasing $N$ for given $K$ since the inter-cell interference can be mitigated by virtue of the multiuser diversity gain, which implies that the performance of the IA-ORA protocol is superior to that of the ORA protocol even in the high SNR regime. Based on these findings, we are capable of designing a mode-switching strategy between the IA-ORA and ORA protocols depending on the values of $K$ and $N$, which results in consistently guaranteeing the best performance.\footnote{Note that approximate values of $K$ and $N$ can be estimated based on the average number of users (e.g., residents or visitors) measured over time in a given network environment.}

  \begin{table}[!t]
  \centering 
  \caption{The pair of ($R_\text{sum}$, $\textsf{snr}$) according to various settings of $K$ and $N$, where $\textsf{snr}$ denotes a crossover where two curves coincide.} \label{table_cross}
  \begin{tabular}{c | c | c | c | c}
  \hline
  $K \backslash N$ & 50 & 100 & 150 & 200\\
  \hline
  \hline
   2 & (2.83, 17dB) & (3.59, 22dB) & (4.05, 26dB) & (4.31, 28dB)\\
   3 & (1.91, 9dB) & (2.35, 12dB) & (2.68, 14dB) & (2.88, 16dB)\\
   4 & (1.69, 7dB) & (2.08, 10dB) & (2.28, 11dB) & (2.45, 12dB)\\
  \hline
  \end{tabular}
  \label{table_crossover}
\end{table}

\subsection{Performance Evaluation Under the Assumption of Imperfect Partial CAIT}
The numerical results in the previous subsections were obtained based on the assumption of perfect partial CAIT.
However, it may be hardly possible to acquire the perfect partial CAIT due to either the channel estimation or feedback error in practice.
To investigate the robustness of our IA-ORA protocol in the presence of channel uncertainty, we perform simulations by assuming the \emph{imperfect} channel gain $\hat{g}_{j \rightarrow k}^i = |\hat{h}_{j \rightarrow k}^i|^2$ at each transmitter, where $\hat{h}_{j \rightarrow k}^i = h_{j \rightarrow k}^i + \Delta h_{j \rightarrow k}^i$.
Here, the error component, $\Delta h_{j \rightarrow k}^i$, is modeled as an i.i.d. and complex Gaussian random variable with zero-mean and the variance $\sigma^2$\cite{zhang2008statistically}.
In the IA-ORA protocol, we use the optimal parameters $\Phi_G^\ast$ and $R^\ast$ found based on (\ref{eq_prac_opt}).

As illustrated in Fig. \ref{fig_Rsum_MSE_N100_SNR10}, it is obvious to see that the aggregate throughput of both IA-ORA and ORA protocols gets degraded as $\sigma^2$ increases.
It is also observed that in the presence of the error component, our IA-ORA protocol still outperforms the ORA protocol and achieves acceptable performance if the error variance $\sigma^2$ is below a tolerable level that leads to negligible performance degradation (e.g., $\sigma^2 < 10^{-2}$).
Moreover, when $\sigma^2$ is large (e.g., $\sigma^2 > 10^{-1}$), the performance gap between the IA-ORA and ORA protocols tends to be wider, showing that our IA-ORA protocol is less sensitive to the channel gain inaccuracy.
\begin{figure}
\begin {center}
\epsfig{file=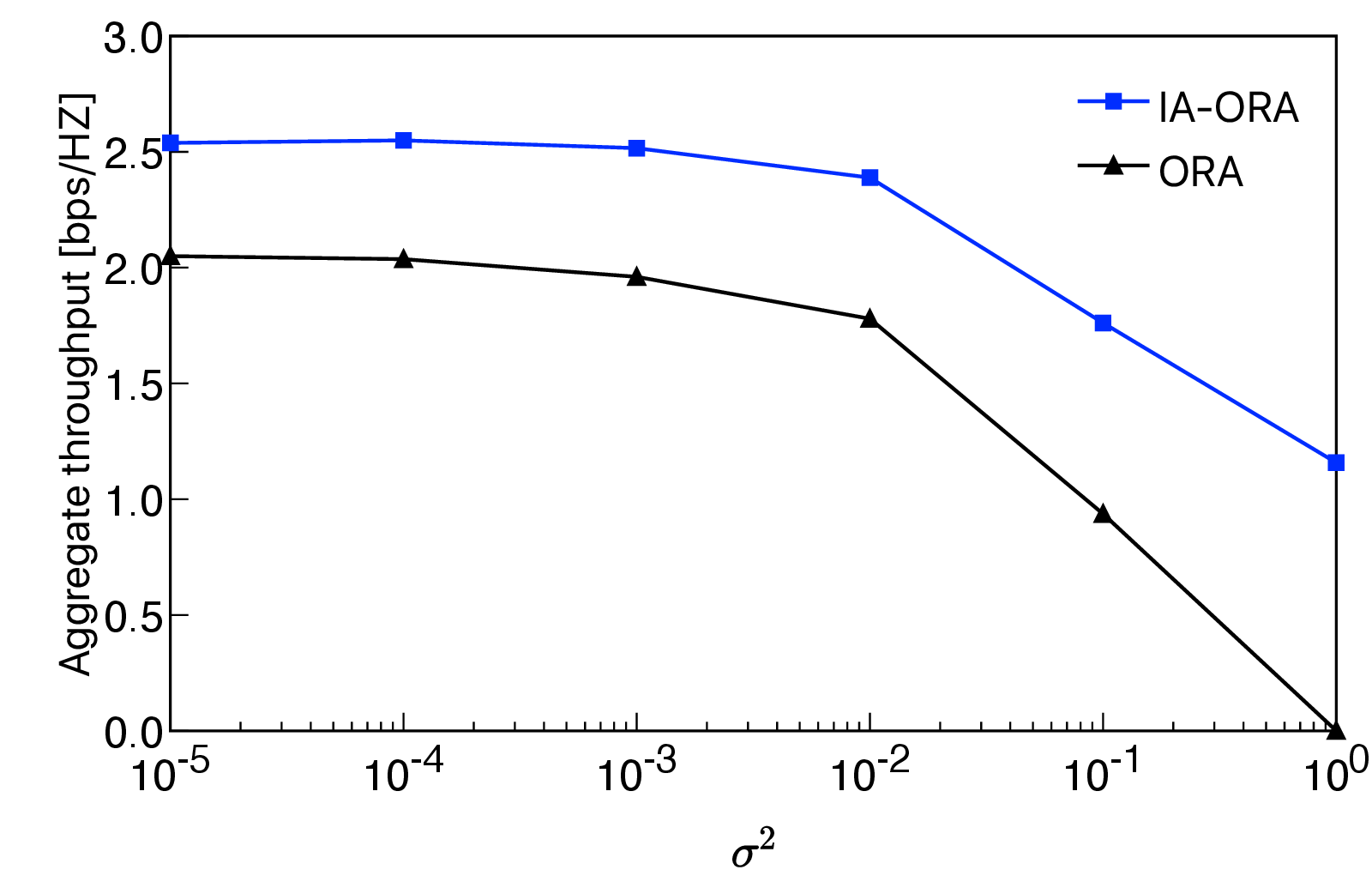, width=0.48\hsize}
\end {center}
\caption{The aggregate throughput versus $\sigma^2$ under the imperfect partial CAIT assumption, where the IA-ORA protocol and the conventional ORA protocol are employed for $\textsf{snr}$ $=10$dB, $K=2$, and $N=100$.}
\label{fig_Rsum_MSE_N100_SNR10}
\end{figure}

\section{Application of Opportunistic Random Access to CSMA/CA} \label{sec_CSMA/CA}

In this section, we extend our ORA protocol to a CSMA/CA random access network.

\subsection{System Model}
In conventional CSMS/CA networks, a user performs channel sensing and random BO before accessing the channel.
Each user selects a BO value $T_b$ uniformly at random from $\{0, 1, \cdots, CW\}$, where $CW$ is the contention window size with a mini-slot unit.
The user initially sets its BO timer to $T_b$ and keeps sensing the channel.
If the channel is busy, then the BO timer is frozen until the channel becomes idle.
Otherwise (i.e., on idle mode), the BO value is decreased by one at each (idle) mini-slot.
When the BO timer of a user expires (i.e., the BO value reaches zero), the user starts  transmission after a duration of one distributed inter-frame space (DIFS), referred to as the waiting time, that may consist of several mini-slots and a short inter-frame space (SIFS).
We assume that a time slot for data packet transmission is much longer than a mini-slot, corresponding to the unit in which channel sensing is performed.
Then, the channel gains are kept the same as long as a user performs channel sensing and random BO.

In CSMA/CA, a user can access the channel if it senses that the channel is idle and the corresponding BO timer expires.
The CSMA/CA protocol is designed in such a way that {\em only one user} transmits at a time and other users who sense the busy channel would avoid data transmission.
Unlike the case of our IA-ORA in slotted ALOHA, multi-packet transmission would not be possible under the CSMA/CA scenario.
Thus, since there is no interference at each receiver, the received signal $y_k \in \mathbb{C}$ at the $k$th AP is given by
\begin{equation}
\label{eq_rx_signal_csma}
y_k = h_{k \rightarrow k}^i x_k^i + z_k,
\end{equation}
where $x_k^i$ is the transmitted signal from the $i$th user in the $k$th cell.

\subsection{Overall Procedure}
We now describe opportunistic transmission in the CSMA/CA random access network. More specifically, we present a CSMA/CA protocol with opportunistic BO such that the BO timer of the user with higher channel gain $g_{k \rightarrow k}^i = |h_{k \rightarrow k}^i|^2$ is accelerated.
To this end, we introduce an acceleration factor $F_a$ for the BO value decrement, which is expressed as
\begin{equation}
\label{eq_fa}
F_{a} = \min ( {\lfloor} g_{k \rightarrow k}^i {\rfloor}, CW),
\end{equation}
where $\min(\cdot,\cdot)$ is imposed since the decrement step should not be larger than the contention window size $CW$.
For the user having a bad channel status (e.g., $g_{k\rightarrow k}^i \ll 1$), it follows that $F_a=0$, which implies that the BO timer pauses and the user would not attempt to access the channel.
In contrast, the user having a good channel status would access the channel faster.
For instance, suppose that $CW=8$ and there is a certain user with $2\le g_{k\rightarrow k}^i <3$ who randomly selects $T_b$ as 6.
Then, the user only needs to wait for 3 idle mini-slots to access the channel due to $F_a=2$.

\subsection{Numerical Evaluation}
\begin{figure}
\begin {center}
\epsfig{file=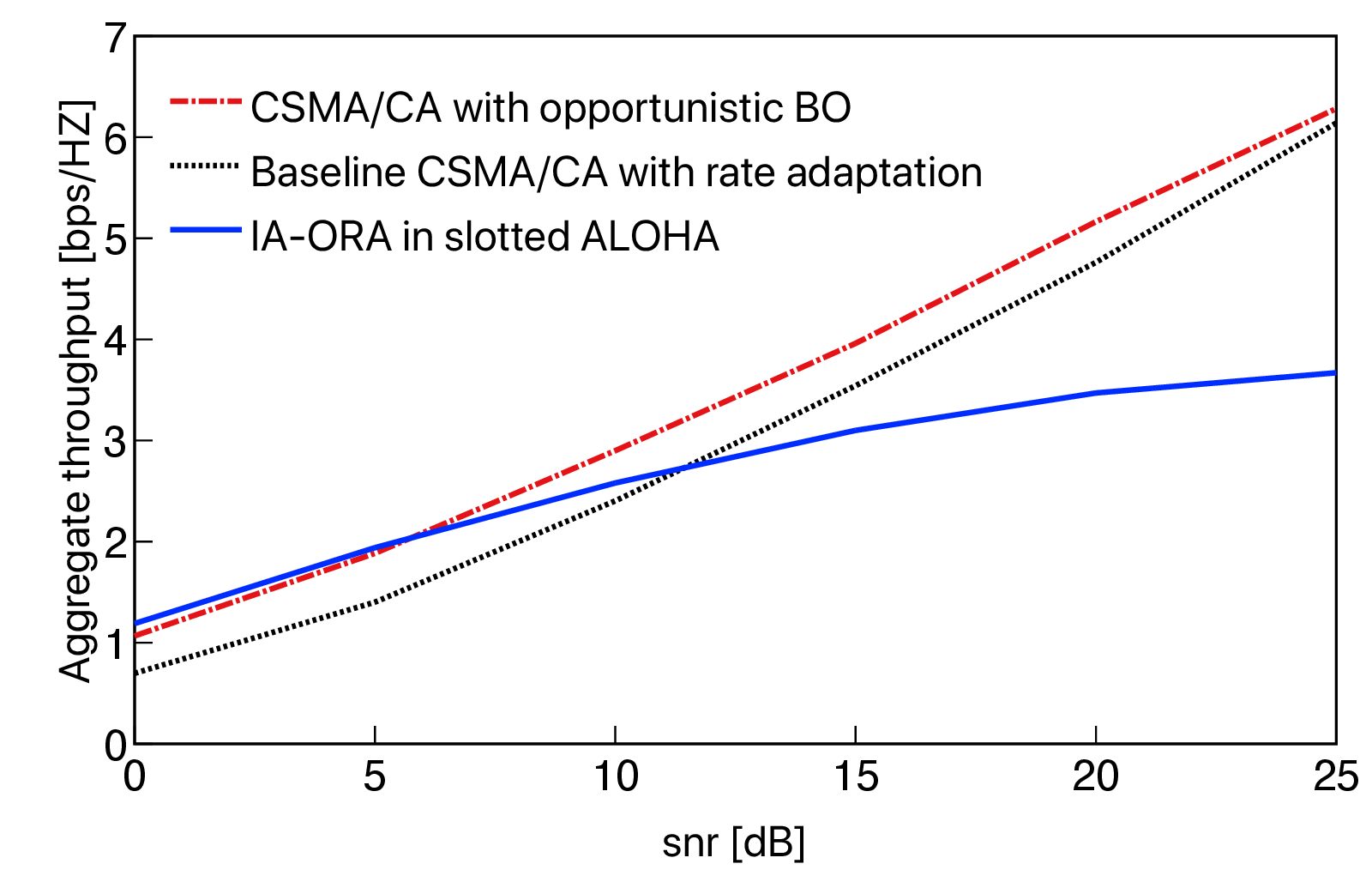, width=0.48\hsize}
\end {center}
\caption{The aggregate throughput versus $\textsf{snr}$, where the proposed  CSMA/CA protocol with opportunistic BO and the IA-ORA protocol in slotted ALOHA, as well as CSMA/CA protocol with rate adaption are employed for $K=2$ and $N=100$.}
\label{fig_Rsum_N100_K2_CSMA}
\end{figure}

We evaluate the aggregate throughput [bps/Hz] of the proposed CSMA/CA protocol with opportunistic BO through computer simulations.
In our CSMA/CA random access UDN scenario, we use the following parameter settings: the mini-slot duration $t_{sl} = 9\mu$s; the symbol duration  $t_{os} = 4\mu$s; the preamble duration $t_{pr} = 20\mu$s; the SIFS duration $t_{SIFS} = 16\mu$s; the DIFS duration $t_{DIFS} = t_{SIFS} + 2t_{sl} = 34\mu$s; the acknowledgement duration $t_{ACK} = t_{pr} + 5t_{os} = 40\mu$s; the number of symbols per packet is $10^4$,  and $CW = 1,023$.
As benchmark schemes, performance of a baseline CSMA/CA protocol with rate adaptation and the IA-ORA protocol used in slotted ALOHA is evaluated for comparison. In contrast to the IA-ORA case, each user is capable of adopting rate adaptation under the CSMA/CA protocol since there is no interference at the receiver; the $i$th user in the $j$th cell transmits with the PHY data rate $R_{k,i}=\log_2(1+g_{k\rightarrow k}^i\textsf{snr})$.

In Fig. \ref{fig_Rsum_N100_K2_CSMA}, the aggregate throughput of the proposed CSMA/CA with opportunistic BO and the two benchmark schemes versus \textsf{snr} in dB scale is plotted for $K=2$ and $N=100$. From the figure, the following observations are found: 1) the IA-ORA protocol is superior in the low SNR regime due to the multi-packet reception capability in IA-ORA; 2) in the high SNR regime, the two CSMA/CA methods outperform the IA-ORA protocol since collision reduction plays a more crucial role in the performance at high SNR; and 3) CSMA/CA with opportunistic BO offers promising gains over the baseline CSMA/CA consistently due to the fact that a waiting time of users with a high channel gain can be shortened. These findings reveal that there is a meaningful trade-off between the potential $K$-fold MAC throughput gain by IA-ORA and the interference (collision) avoidance by CSMA/CA.

\section{Concluding Remarks} \label{sec_conclusion}
In this paper, the decentralized IA-ORA protocol operating with partial CAIT was presented in the ultra-dense $K$-cell slotted ALOHA random access network, where no centralized coordination from the serving APs is required and thus the proposed protocol is appropriate for MTC or IoT. The aggregate throughput scaling achieved by the proposed protocol was then analyzed.
As our main analytical contribution, it was proved that the IA-ORA protocol asymptotically achieves the aggregate throughput scaling of $\frac{K}{e} (1-\epsilon) \log (\textsf{snr} \log N)$ in our multi-cell random access network if $N$ scales faster than $\textsf{snr}^{\frac{K-1}{1-\delta}}$ with respect to \textsf{snr} for small constants $\epsilon>0$ and $0<\delta<1$. To validate the analytical result and the effectiveness of our IA-ORA protocol, extensive computer simulations were also conducted---the throughput scaling and user scaling laws were confirmed numerically; the superiority of our protocol over benchmark schemes was shown in practical settings; and the robustness of our protocol was investigated when imperfect partial CAIT is assumed. Finally, our ORA protocol was extended to a CSMA/CA random access network, and then the trade-off between the $K$-fold MAC throughput gain by IA-ORA and the collision avoidance by CSMA/CA was discussed. 
Our random access framework would shed important insights for intelligently solving intra-cell collision and inter-cell interference problems.

Future research directions include extensions to networks with multiple antennas, opportunistic random access with user fairness, and cooperative slotted ALOHA systems. A potential avenue of future research in this area also includes the exploitation of interference awareness in CSMA/CA networks.

\appendices
\section{Proof of Lemma 1} \label{proof_lemma1}
\begin{proof}
[Proof: \nopunct] As shown in (\ref{eq_discrete_R}), the PHY data rate $R$ is a function of the integer $\nu \in \{0,1, \cdots, (K-1)N\}$.
We recall that the term $\Pr\left( \frac{\Phi_G}{\textsf{snr}^{-1} + i \Phi_I} \geq 2^R - 1 \right)$ in (\ref{eq_qs}) is given by $1$ and $0$ for $i \in \{ 0, \cdots, \nu \}$ and $i \in \{ \nu+1 \cdots, (K-1)N \}$, respectively, if $R$ is set to a value such that
\begin{equation} \nonumber
\label{ }
\frac{\Phi_G}{\textsf{snr}^{-1} + (\nu+1)\Phi_I} < 2^R-1 \leq \frac{\Phi_G}{\textsf{snr}^{-1} + \nu \Phi_I}.
\end{equation}
Thus, by partitioning the whole range of $i$ into two sub-ranges, (\ref{eq_qs}) can be expressed as
\begin{equation} \nonumber
\begin{split}
\tilde{p}_s &= \sum_{i=0}^{\nu} {1 \cdot \Pr(\tilde{n} = i)} + \sum_{i=\nu+1}^{(K-1)N} {0 \cdot \Pr(\tilde{n} = i)}\\
&= \sum_{i=0}^{\nu} {\Pr(\tilde{n} = i)}\\
&= \Pr( \tilde{n} \leq \nu ),
\end{split}
\end{equation}
which represents the CDF of the binomial random variable $\tilde{n} \sim B((K-1)N, \frac{1}{N})$. Since this CDF can be expressed as the regularized incomplete beta function \cite{abramowitz1966handbook}, it follows that
\begin{equation} \nonumber
\label{ }
\tilde{p}_s = I_{1-\frac{1}{N}} \left( (K-1)N-\nu, \nu+1 \right).
\end{equation}
This completes the proof of this lemma.
\end{proof}

\section{Proof of Theorem 1} \label{proof_theorem1}
\begin{proof}[Proof: \nopunct]
Using $\Phi_I = \textsf{snr}^{-1}$, (\ref{eq_bound_FI2}) can be rewritten as
\begin{equation}
\label{eq_ana_bound}
\begin{split}
R_\text{sum} \geq &K \left(1-\frac{1}{N}\right)^{N-1} \cdot \log_2\left( 1 + \frac{ \textsf{snr} \cdot \ln(F_{\tilde{I}}(\textsf{snr}^{-1}) N) }{ \nu + 1}  \right)\\
&\cdot I_{1-\frac{1}{N}} \left( (K-1)N-\nu, \nu+1 \right).
\end{split}
\end{equation}
We then select a finite value of $\nu$ that makes the function $I_{1-\frac{1}{N}} \left( (K-1)N-\nu, \nu+1 \right)$ approach almost one by following the equation below:
\begin{equation}
\label{eq_Q_star}
I_{1-\frac{1}{N}} \left((K-1)N-\nu^\ast, \nu^\ast+1 \right) = 1 - \epsilon,
\end{equation}
where $\epsilon>0$ is an arbitrarily small constant.\footnote{Note that $\epsilon$ is a given design parameter and $\nu^*$ can be expressed as a function of given $\epsilon$ and $K$.}
Since the function $I_{1-\frac{1}{N}} \left( (K-1)N-\nu, \nu+1 \right)$ is monotonically increasing with $\nu$ as shown in Fig. \ref{fig_Q}, we can find finite $\nu^*(K, \epsilon)$ by taking the inverse of the regularized incomplete beta function in (\ref{eq_Q_star}).

By substituting (\ref{eq_Q_star}) and $\nu = \nu^*(K, \epsilon)$ into (\ref{eq_ana_bound}), we have
\begin{equation} \nonumber
\label{eq_lb_final}
\begin{split}
R_\text{sum} \geq &K \left(1-\frac{1}{N}\right)^{N-1} \cdot \log_2\left( 1 + \frac{ \textsf{snr} \cdot \ln(F_{\tilde{I}}( \textsf{snr}^{-1}) N) }{ \nu^*(K, \epsilon) + 1}  \right)\\ &\cdot (1-\epsilon).
\end{split}
\end{equation}

From the fact that $\left(1-\frac{1}{N}\right)^{N-1}$ is monotonically decreasing with increasing $N$ and $\lim\limits_{N \rightarrow \infty}{\left(1-\frac{1}{N}\right)^{N-1}} = \frac{1}{e}$, the aggregate throughput is lower-bounded by
\begin{equation}  \nonumber
\label{eq_lb_final}
\begin{split}
R_\text{sum} &\geq \frac{K}{e} \cdot \log_2\left( 1 + \frac{ \textsf{snr} \cdot \ln(F_{\tilde{I}}( \textsf{snr}^{-1}) N) }{ \nu^*(K, \epsilon) + 1}  \right) \cdot (1-\epsilon)\\
&\geq \frac{K}{e} \cdot \log_2\left( 1 + \frac{ \textsf{snr} \cdot \ln( c_1 \left( \textsf{snr}^{-1} \right)^{K-1} N) }{ \nu^*(K, \epsilon) + 1}  \right) \cdot (1-\epsilon),
\end{split}
\end{equation}
where the second inequality follows due to Lemma \ref{lemma_FI_bound}.
In order to obtain the logarithmic gain (i.e., the power gain with increasing $N$), it should be fulfilled that $c_1 \left( \textsf{snr}^{-1} \right)^{K-1} N \geq N^\delta$ for $0<\delta<1$, which finally leads to
\begin{equation}
\label{eq_USC_theorem}
N = \Omega \left( \textsf{snr}^{\frac{K-1}{1-\delta}} \right).
\end{equation}
In consequence, under the user scaling condition in (\ref{eq_USC_theorem}), the aggregate throughput can be lower-bounded by
\begin{equation}
\label{eq_finalBound}
\begin{split}
R_\text{sum} \geq \frac{K}{e} (1-\epsilon) \cdot \log_2\left( 1 + \frac{ \delta \cdot  \textsf{snr} \ln( N) }{ \nu^*(K, \epsilon) + 1}  \right),
\end{split}
\end{equation}
which scales as $\frac{K}{e} (1-\epsilon) \log (\textsf{snr}\log N)$ with respect to \textsf{snr} from the fact that $\delta$ and $\nu^*(K, \epsilon)$ are some constants independent of $N$.
This completes the proof of the theorem.
\end{proof}

\section*{Acknowledgment}
The material in this paper has been presented in part at the IEEE International Conference on Communications, Paris, France, May 2017 \cite{lin2017achieving} and the IEEE International Symposium on Information Theory, Aachen, Germany, June 2017 \cite{lin2017multi}.

\end{document}